\newcommand{\CRS}{C_{\mathrm{RS}}}
\newcommand{\tildeM}{\widetilde{M}}
\newcommand{\tildeP}{\widetilde{P}}
\title{Error-Correcting Graph Codes}
\author{
    \iftoggle{blind}{}{
        Swastik Kopparty\thanks{Department of Mathematics and Department of Computer Science, University of Toronto.
            Research supported by an NSERC Discovery Grant.
        Email: swastik.kopparty@utoronto.ca} \and
        Aditya Potukuchi\thanks{Department of Computer Science, York University. Supported in part by NSERC Discovery grants RGPIN-2023-05087 \& DGECR-2023-00408 Email: apotu@yorku.ca} \and
        Harry Sha\thanks{Department of Computer Science, University of Toronto. Email: shaharry@cs.toronto.edu}
    }
}
\date{\today} 
\begin{document}

\maketitle

\begin{abstract}
    In this paper, we construct {\em Error-Correcting Graph Codes}. An error-correcting graph code of distance $\delta$ is a family $C$ of graphs on a common vertex set of size $n$, such that if we start with any graph in $C$, we would have to modify the neighborhoods of at least $\delta n$ vertices in order to obtain some other graph in $C$.
This is a natural graph generalization of the standard Hamming distance error-correcting codes for binary strings. 

Yohananov and Yaakobi were the first to construct codes in this metric, constructing good codes for $\delta < 1/2$, and optimal codes for a large-alphabet analogue. We extend their work by showing
\begin{enumerate}
    \item Combinatorial results determining the optimal rate vs. distance trade-off nonconstructively.
    \item Graph code analogues of Reed-Solomon codes and code concatenation, leading to positive distance codes for all rates and positive rate codes for all distances.
    \item Graph code analogues of dual-BCH codes, yielding large codes with distance $\delta = 1-o(1)$. This gives an explicit ``graph code of Ramsey graphs''.
\end{enumerate}

Several recent works, starting with the paper of Alon, Gujgiczer, Körner, Milojević, and Simonyi, have studied more general graph codes; where the symmetric difference between any two graphs in the code is required to have some desired property. Error-correcting graph codes are a particularly interesting instantiation of this concept.
\end{abstract}

\newpage
\tableofcontents
\newpage

\section{Introduction}
In this paper, we study {\em Error-Correcting Graph Codes}.
These are large families of undirected graphs on the same vertex set such that any two graphs in the family are ``far apart'' in a natural graph distance. Informally, the graph distance between two graphs on the same vertex set of size $n$ measures the minimum number of vertices one needs to delete to make the resulting graphs identical (not just isomorphic). 
This can also be thought of as (1) the number of vertices whose neighborhoods one has to modify to go from one graph to another, (2) the vertex cover number of the symmetric difference of the two graphs, or (3) $n$ minus the largest independent set in the symmetric difference of the two graphs.

\noindent{\bf Definition: (Graph Distance)\quad} Given two graphs $G$ and $H$ on vertex set $[n]$, the graph distance $\dgraph(G,H)$ is the size of the smallest set $S \subseteq [n]$ such that $G[\overline{S}] = H[\overline{S}]$.

This is a very natural metric and encompasses deep information about graphs. 
For example, note the following two simple facts: (1) the graph distance of a graph from the empty graph is $n$ minus the independence number of the graph. 
(2) the graph distance of a graph from the complete graph is $n$ minus the clique number of the graph. 
Thus, the answer to the question: ``how far can a graph be from both the empty graph and the complete graph?'' 
is precisely the question of finding the right bound for the diagonal Ramsey numbers; the answer is $n - O(\log n)$.

Codes in the graph-distance metric were defined and initially studied by Yohananov and Yaakobi~\cite{yohananovCodesGraphErasures2019}, and Yohanonanov, Efron, and Yaakobi \cite{yohananovDoubleTripleNodeErasureCorrecting2020}.
Their setting allows for arbitrary symbols $\alpha \in \F_q$ to be written on every edge of $K_n$ ($q = 2$ corresponds to graph codes), including self-loops. They gave a complete understanding of the combinatorial limits for large $q$, optimal explicit constructions in the large $q$ setting, and asymptotically good explicit constructions for the $q=2$ setting for distance at most $n/2$. 

Error-correcting graph codes also fall into the general framework of graph codes defined by Alon, Gujgiczer, Körner, Milojević, and Simonyi~\cite{AGKMS-graph-codes}, where for a fixed family $\mathcal F$ of graphs, one seeks a large code $C$ of graphs on the same $n$-vertex set such that the symmetric difference of any two graphs in $C$ does not lie in $\mathcal F$. 
This class of problems was studied for a wide variety of natural $\mathcal F$ in a number of recent works~\cite{AGKMS-graph-codes,AlonGC1,AlonGC2}. 
As discussed in~\cite{AlonGC1}, for a suitable choice of $\mathcal F$, graph codes become equivalent to classical Hamming distance codes.

Additionally, the notion of graph distance arises in the definition of node differential privacy (see, for example \cite{chenRecursiveMechanismNode2013, hayAccurateEstimationDegree2009, kasiviswanathanAnalyzingGraphsNode2013, jainTimeAwareProjectionsTruly2024}). 
One instantiation of this setting is a graph encoding a social network where vertices correspond to people and edges correspond to social connections.
The goal for node differential privacy is to design an algorithm $\mathcal{A}$ that approximately computes certain statistics of the graph (such as counting edges, triangles, and connected components) while maintaining each individual’s privacy. 
Here, privacy is ensured by requiring that the output distribution on a certain graph does not change by much when any one vertex is deleted.
In other words, for any graphs $G$, $H$ of graph distance $1$, the output distribution of $\mathcal{A}$ on the $G$ and $H$ should be similar.
Graph distances greater than $1$ are then considered in the continual release model where the graph varies over time as studied in \cite{jainTimeAwareProjectionsTruly2024}.

Finally, we note that error-correcting codes are pseudorandom objects, and the connection to Ramsey graphs suggests that error-correcting graph codes might be closely related to pseudorandom graphs. 
Thus, the problem of studying and explicitly constructing a pseudorandom family of pseudorandom graphs is interesting in its own right.


\subsection{Rate versus distance and prior work}

Similar to the Hamming setting, we briefly define the dimension, rate, and distance of a graph code $C$ on $n$ vertices where each edge is allowed to have an arbitrary symbol $\alpha \in \mathbb{F}_q$ written on it. 
\begin{itemize}
    \item \textbf{Dimension.} $k = \log_q(|C|)$.
    \item \textbf{Rate.} $R = \log_q(|C|) / \binom{n}{2}$.
    \item \textbf{Distance.} The distance of a code is the largest $d$ such that $\dgraph(G, H) \geq d$ for each $G, H \in C$ such that $G \neq H$. The relative distance $\delta$ is $d / n$.
\end{itemize}

Unless specified otherwise, we will always be interested in the asymptotics of the above parameters as $n \to \infty$. All construction of graph codes in this paper will be linear subspaces of $\F_q^{{n \choose 2}}$. A construction of graph codes is called explicit if a basis for the space can be computed in time $\poly(n)$. A construction of graph codes is called strongly explicit if the $(i,j)$ entry of the $e$'th basis element can be computed in time $\poly(\log(n))$ when given $i,j,e$ as input.

Yohananov and Yaakobi~\cite{yohananovCodesGraphErasures2019} showed, by an argument similar to the Singleton bound, that for a graph code of dimension $k$ and distance $d$, we have

\begin{equation}
k \leq \binom{n - d + 1}{2}.
\end{equation}

In terms of and rate $R$, and relative distance $\delta$,  we have

\begin{equation}
\label{eq:sing}
    R \leq (1 - \delta)^2 + o(1).
\end{equation}


They~\cite{yohananovCodesGraphErasures2019} (Construction 3) then showed, by explicit construction, that this bound is tight for large $q$. The main ingredient is a way to convert Hamming distance into graph distance using the tensor product. 
A similar code will be an important ingredient in our constructions.

\cite{yohananovCodesGraphErasures2019} (Construction 5) also gave explicit constructions of codes over the binary alphabet for $\delta \leq 1/2$.
This leverages a connection to symmetric array codes and a construction by Schmidt \cite{schmidtSymmetricBilinearForms2015} related to the rank metric. The trade-off achieved, $R \sim 1 - 2 \delta$, is close to the upper bound~\eqref{eq:sing} for $\delta$ very small.

 Finally, we note that \cite{yohananovCodesGraphErasures2019} and \cite{yohananovDoubleTripleNodeErasureCorrecting2020} also give optimal constructions in the $\delta = O(1/n)$ regime.
 
 These constructions\footnote{Note that all of the constructions in \Cref{table:prev-constructions} are for undirected graphs where self-loops are allowed. 
Since these are linear codes, one can impose $n$ linear constraints to get codes with no self-loops (zeros on the diagonals of the adjacency matrix). 
Since the block length of these codes is $\binom{n}{2}$, adding these these constraints changes the rate by $o(1)$ while preserving explicitness.

Adding the $0$-diagonal constraint to the symmetric array codes while preserving {\em strong explicitness} is more subtle. Alessandro Neri~\cite{Neri-pc} observed that this can be done using results of Gow and Quinlan (Theorem 6 and Theorem 7 of~\cite{GowQuinlan}), which themselves are related to results of Delsarte and Goethals~\cite{DG-alternating}.}
 are summarized in \Cref{table:prev-constructions}.

\begin{center}
    \begin{table}[]
    \begin{tabular}{llll}
\textbf{Name}                                                          & \textbf{$\delta$} & \textbf{Field}     & \textbf{Tradeoff} \\ \hline
\cite{yohananovCodesGraphErasures2019}               & $\leq 1$          & $\F_q, q \geq n-1$ & $R = (1-\delta)^2$ (optimal)           \\
\cite{yohananovCodesGraphErasures2019}           & $\leq 1/2$        & $\F_2$             & $R = 1 - 2\delta$ \\
\cite{yohananovCodesGraphErasures2019},\cite{yohananovDoubleTripleNodeErasureCorrecting2020}                & $= O(1/n)$           & $\F_2$             & optimal          \\
\end{tabular}
    \caption{
        Summary of constructions in \cite{yohananovCodesGraphErasures2019}, and \cite{yohananovDoubleTripleNodeErasureCorrecting2020}
    }
    \label{table:prev-constructions}
    \end{table}
\end{center}

\subsection{Results}

Our main results are:
\begin{enumerate}[1.)]
    \item We observe that there are binary graph codes achieving $R = (1-\delta)^2 - o(1)$ for any constant $\delta \in (0,1)$. Thus $R = (1-\delta)^2 - o(1)$ (or $\delta=  1-\sqrt{R} -o(1)$) is the optimal $R$ vs $\delta$ tradeoff even for the binary alphabet (\cite{yohananovCodesGraphErasures2019} showed this for large alphabet).
    \item We give constructions of graph codes that have positive constant $R$ for all constant $\delta < 1$. 
        We give an explicit construction with $\delta = 1 - O(R^{1/6})$, and a strongly explicit construction with $\delta = 1-O(R^{1/8})$.
        We also give a quasi-polynomial time explicit construction achieving $\delta = 1 - O(R^{1/4})$.
        Compare this with the optimal nonconstructive tradeoff of $\delta = 1 - R^{1/2}$. 

        Although these codes are not optimal, they are the first binary error-correcting graph codes achieving $\delta > 1/2$ with a constant rate.
    \item We give (strongly) explicit constructions of graph codes with very high distance $\delta = 1 - O(n^{-\epsilon})$ and inverse polynomial rate $R = \Omega(n^{\epsilon - 1/2})$ for constant $\epsilon > 0$. 
        This gives a ``graph code of Ramsey graphs,’’ as will be discussed later.
\end{enumerate}

\paragraph{Independent work:} Pat Devlin and Cole Franks~\cite{Pat-comm} independently proposed the study of graph error-correcting codes under this metric, determined the optimal $R$ vs $\delta$ tradeoff for binary codes, and gave some weaker explicit constructions of graph codes that worked for certain ranges of $R$ and $\delta$.

\subsection{Techniques}
We now discuss our techniques. We often specify graphs by their adjacency matrices, viewed as matrices with $\F_2$ entries.

Our nonconstructive existence result is a straightforward application of the probabilistic method. 
We consider a uniformly random $\F_2$-linear subspace of the $\F_2$-linear space of symmetric $0$-diagonal $n\times n$ matrices (i.e., the space of all adjacency matrices of graphs); this turns out to give a good graph code with optimal $R$ vs $\delta$ tradeoff.

To get explicit constructions of asymptotically good codes for any constant $\delta \in (0,1)$, we use several ideas.
\begin{enumerate}
    \item We start with a slight variation of \cite{yohananovCodesGraphErasures2019} (Construction 3), which gives a way to get a good graph code from a classical Hamming-distance linear code $C \subseteq \F_2^n$. 
        We first consider the tensor code $C \otimes C$, where the elements are matrices whose rows and columns are codewords of $C$. 
        A-priori, $C$ could contain matrices that are neither symmetric nor have a $0$ diagonal.
        But interestingly, if we consider the 
    set $C^*$ of all matrices in $C \otimes C$ that are symmetric and have $0$ diagonal, then $C^*$ is a linear space with quite large dimension. 
        In particular, if the classical Hamming distance code $C$ has positive rate, then so does the graph code $C^*$. 
        We call this construction $C^* = \STCZD(C)$ (Symmetric Tensor Code with Zero Diagonals).

        It turns out that if $C$ has good relative distance (in the Hamming metric), then $\STCZD(C)$ has good distance in the graph metric. 
        However, the relative graph distance of $\STCZD(C)$ such a code is bounded by the relative distance of $C$ -- and since $C$ is a binary code, this is at most $1/2$. 
    
    \item Now, we bring in another idea from the Hamming code world: code concatenation. 
        Instead of constructing a graph code of symmetric zero-diagonal matrices over $\F_2$, we instead construct a ``large-alphabet graph code" of symmetric zero-diagonal matrices over $\F_q$  for some large $q = 2^t$ and then try to reduce the alphabet size down to $2$ by replacing the $q$-ary symbols with $\F_2$-matrices with suitable properties. 
        
        Applying the analog of $\STCZD$ to a large alphabet code allows one to get large-alphabet graph codes with large $\delta$, approaching $1$ (since over large alphabets, Hamming distance codes can have length approaching $1$). 
        Using Reed-Solomon codes as these large alphabet codes also allows us to make the $\STCZD$ construction strongly explicit. 
        Furthermore, when applied to Reed-Solomon codes, these codes have a natural direct description: these are the evaluation tables of low-degree bivariate polynomials $P(X, Y)$ on product sets $S \times S$ that are (1) symmetric (to get a symmetric matrix), and (2) multiples of $(X-Y)^2$ (to get zero diagonal).
        
    \item What remains now is to develop the right kind of concatenation so that the resulting graph code has good distance. 
        This turns out to be subtle and requires an ``inner code" with a stronger ``directed graph distance" property with $\delta$ nearly 1. 
        Fortunately, this inner code we seek is very slowly growing in size, and we may find it by brute force search.  
        This concludes our description of our explicit construction of graph codes with $\delta$ approaching $1$ and positive constant $R$.
\end{enumerate}

Finally, we discuss our constructions for very high distances, $\delta = 1 - o(1)$.
In this regime, as mentioned earlier, this is related to constructions of Ramsey graphs, a difficult problem in pseudorandomness with a long history. 
Our constructions work up to $\delta = 1 - \Omega\left(\frac{1}{\sqrt{n}}\right)$;  concretely, we get a large linear space of graphs such that all graphs in the family have no clique or independent set of size $\Omega(\sqrt{n})$. 
The construction is based on polynomials over finite fields of characteristic $2$: When $n = 2^t$, we consider a linear space of certain low degree univariate polynomials $f(X)$ over $\F_{2^t}$ and create the $\F_2$ matrix with rows and columns indexed by $\F_{2^t}$ whose $x,y$ entry is $\Tr(f(x+y))$. 
Here $\Tr$ is the finite field trace map from $\F_{2^t}$ to $\F_2$. 
The use of $\Tr$ of polynomials is inspired by the construction of dual-BCH codes. 
We then show that any such matrix has no large clique or independent set unless $\Tr \circ f$ is identically $0$ or identically $1$ (corresponding to the empty and complete graphs, respectively). 
The proof uses the Weil bounds on character sums and a Fourier analytic approach to bound the independence number for the graphs.

Our constructions are listed in \Cref{table:list}. 

\begin{center}
    \begin{table}[h]
        \begin{tabular}{lll}
        Name                                            & Approximate Tradeoffs                                                      & Explicit/Strongly Explicit?  \\ \hline
        Random Linear Codes (\Cref{prop:ub}) & $R = (1 - \delta)^2 - o(1)$                                             & No/No               \\
        Concatenated RS Tensor Codes (\Cref{code:concat-RS})   & $R = (1 -\sqrt{\delta})^4 - o(1)$                                                 & Almost Yes\footnote{This construction has a quasipolynomial time construction $n^{O(\log n)}$.}/No                \\
        Double Concatenated  RS Tensor Codes  & $R = (1 - \delta^{1/3})^6 - o(1)$                                                 & Yes/No                \\
        Triple Concatenated  RS Tensor Codes (\Cref{code:triple-concatenation})   & $R = (1 - \delta^{1/4})^8 - o(1)$                                                 & Yes/Yes                \\
        Dual BCH Codes (\Cref{code:dualBCH})                        & $R = \log(n)(1 - \delta) / \sqrt{n}$                                       & Yes/Yes              
        \end{tabular}
        \caption{
            A list of constructions of error-correcting graph codes in this paper. 
            A particular focus is the regime where $\delta$ is very close to $1$.
        } 
        \label{table:list}
    \end{table}
\end{center}


\subsection{Concluding thoughts and questions}

The most interesting question in this context is obtaining explicit graph code constructions with optimal $R$ vs $\delta$ tradeoff. 
While we have several constructions achieving nontrivial parameters in various regimes, it would even be interesting to get the right asymptotic behavior for the endpoints with $\delta$ approaching $1$.
The setting of large $\delta$ (including $\delta = 1 -o(1)$) seems especially challenging, given the connection with the notorious problem of constructing Ramsey graphs.

Another interesting question is to get decoding algorithms for graph codes. 
For a certain graph code $C$, if we are given a graph that is promised to be close in graph distance to some graph $G$ in $C$. 
Then, can we efficiently find $G$?

A more general context relevant to error-correcting graph codes is the error correction of strings under more general error patterns. 
Suppose we have a collection of subsets $S_i \subseteq [m]$ for $i \in [t]$, where $\bigcup_i S_i = [m]$. 
These $S_i$ denote the corruption zones; a single ``corruption" of a string $z \in \{0,1\}^m$ entails, for some $i \in [t]$, changing $z|_{S_i}$ to something arbitrary in $\{0,1\}^{S_i}$. 
We want to design a code $C \subseteq \{0,1\}^m$ such that starting at any $x \in C$ if we do fewer than $d$ corruptions to $x$, we do not end up at any $y \in C$ with $y \neq x$. 
When the $S_i$ are all of size $b$ and form a partition of $[m]$ into $t = m/b$ parts, then such a code is exactly the same as a classical Hamming distance code an alphabet of size $2^b$. 
Error-correcting graph codes give a first step into the challenging setting where the $S_i$ all pairwise intersect - here we have $m = {\binom{n}{2}}$, $t = n$, the $S_i$ (which correspond to all edges incident on a given vertex) all have size $n-1$, and every pair $S_i$ and $S_j$ intersect in exactly $1$ element. 
It would be interesting to develop this theory -- to both find the limits of what is achievable and to develop techniques for constructing codes against this error model.

Finally, there are many other themes from classical coding theory that could make sense to study in the context of graph codes and graph distance, including in the context of sublinear time algorithms. 
It would be interesting to explore this.

\paragraph{Organization of this paper:} 
We set up basic notions in Section~\ref{sec:basics}.
We show the existence of optimal graph codes in Section~\ref{sec:combinatorics}.
In \Cref{sec:concatenation} we construct asymptotically good codes.
Finally, in Section~\ref{sec:weil}, we show explicit constructions of graph codes with very high distance.

\section{Graph codes: Basics}\label{sec:basics}

\paragraph{Definitions and notations.}
All graphs will be simple, undirected graphs on the vertex set $[n]$ unless otherwise noted.
For any graph $G$, use $A_G$ to denote the adjacency matrix of $G$, and view $A_G$ as an element of the vector space $\F_2^{\binom{n}{2}}$.
For two graphs $G$, $H$, let $G \Delta H$ be the symmetric difference of the two graphs, i.e. $A_{G \Delta H} = A_G - A_H$.
For a subset $S \subseteq [n]$, we use $G[S]$ to denote the subgraph of $G$ induced by the vertex set $S$. 
If $A$ is a $n\times n$ matrix and $S, T \subset [n]$, let $A_{S, T}$ be the sub-matrix indexed by $S$ on the rows and $T$ on the columns.
For $x \in [0,1]$, we use $h_2(x) = - x \log_2 x - (1-x)\log_2(1-x)$ to denote the binary entropy function.

\begin{definition}[Graph distance and relative graph distance]\label{def:graph-dist}~
    \begin{itemize}
        \item The {\em graph distance} between two graphs $G$ and $H$, denoted by $\dgraph(G,H)$, is the smallest $d \in \N$ such that there is a set $S\subseteq [n]$, $|S| = d$, and $G[[n] \setminus S] = H[[n] \setminus S]$. 
        \item The {\em relative graph distance}, or simply relative distance, between $G$ and $H$ is denoted by $\delta_{\text{graph}}(G,H)$, and is the quantity $\frac{\dgraph(G,H)}{n}$.
    \end{itemize}
\end{definition}

In the above definition, we require that the graphs $G[\overline{S}]$ and $H[\overline{S}]$ be identical and not just isomorphic. 
Lemma~\ref{prop:dgraph-characterizations} describes several equivalent characterizations of graph distance.

\begin{proposition}[Alternate characterizations of $\dgraph$]\label{prop:dgraph-characterizations}
Suppose $G$ and $H$ are graphs on the same vertex set. Then 
    \begin{enumerate}[1.]
        \item $\dgraph(G, H)$ is the minimum vertex cover size of $G \Delta H$.
        \item $\dgraph(G, H)$ is the minimum number of vertices whose neighborhoods you need to edit to transform $G$ into $H$
        \item $\dgraph(G, H)$ is the minimum number of vertices whose neighborhoods you need to edit to transform $G \Delta H$ into the empty graph.
    \end{enumerate}
\end{proposition}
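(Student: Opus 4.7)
The plan is to show that all three quantities equal the vertex cover number $\tau(G \Delta H)$ of the symmetric difference, and therefore equal $\dgraph(G,H)$ as defined.

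First, I would prove characterization 1 directly from the definition. The key observation is that for any $S \subseteq [n]$, the equality $G[\overline{S}] = H[\overline{S}]$ holds if and only if $(G \Delta H)[\overline{S}]$ has no edges, i.e., $\overline{S}$ is an independent set in $G \Delta H$, i.e., $S$ is a vertex cover of $G \Delta H$. Minimizing $|S|$ over all such $S$ therefore yields exactly $\tau(G \Delta H)$ on the right side and $\dgraph(G,H)$ on the left.

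Next, I would formalize what it means to ``edit the neighborhood'' of a vertex $v$: we replace the row/column of $A_G$ indexed by $v$ by an arbitrary $0/1$ vector (with a $0$ on the diagonal). If we are allowed to do this for every vertex in a set $S$, then the edges of $G$ that can possibly change are precisely those incident to $S$, and every such edge can be independently flipped or preserved. Hence editing the neighborhoods of the vertices in $S$ can transform $G$ into $H$ if and only if every edge of $G \Delta H$ is incident to $S$, which is exactly saying that $S$ is a vertex cover of $G \Delta H$. Minimizing $|S|$ gives characterization 2. Characterization 3 is then immediate by applying characterization 2 with the pair $(G \Delta H, \varnothing)$, noting that $(G \Delta H) \Delta \varnothing = G \Delta H$, so the vertex cover condition is identical.

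There is no real obstacle here; the proof is a routine unpacking of definitions. The only mild subtlety is being explicit about the model of ``editing a neighborhood'' and confirming that the set of edge modifications achievable by editing the neighborhoods of $S$ coincides with the set of edges incident to $S$, which is immediate once one writes it out on the adjacency matrix.
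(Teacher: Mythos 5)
Your proof is correct. The paper states Proposition~\ref{prop:dgraph-characterizations} without any proof, treating it as a routine unpacking of definitions, and your argument supplies exactly the natural one: the equivalence $G[\overline{S}] = H[\overline{S}] \iff S$ is a vertex cover of $G \Delta H$ gives characterization 1 immediately from the definition, and then observing that the edges achievable by editing the neighborhoods of a set $S$ are precisely those incident to $S$ gives characterization 2, with characterization 3 as the special case $(G \Delta H, \varnothing)$.
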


Note that $\dgraph$ is a metric (see \cite{yohananovCodesGraphErasures2019}, Lemma 5).

\begin{definition}[Graph code]
We say that a set $C \subseteq 2^{\binom{[n]}{2}}$ is a {\em graph code} on $[n]$ with distance $d$ if for every pair of graph $G,H\in C$, we have that $\dgraph(G,H) \geq d$.
\begin{itemize}
\item The {\em rate} of $C$, denoted by $R_C$, is the quantity $\frac{\log_2(|C|)}{\binom{n}{2}} \geq\frac{2\log_2(|C|)}{n^2}$. 
\item The {\em distance} (resp. {\em relative distance}) of $C$, denoted by $d_C$ (resp. $\delta_C$), is the quantity $\min_{\substack{G,H \in C\\G \neq H}}\dgraph(G,H)$ (resp. $\min_{\substack{G,H \in C\\G \neq H}}\delta_{\text{graph}}(G,H)$).
\end{itemize}
\end{definition}

\paragraph{Upper bound.}
As noted in \cite{yohananovCodesGraphErasures2019}, the Singleton bound can be used to obtain an upper bound on the rate of a graph code. We include a proof here for completeness.

\begin{proposition}\label{prop:ub}
Any graph code with relative distance $\delta$ has dimension at most $\binom{n(1 - \delta) + 1}{2} $.
\end{proposition}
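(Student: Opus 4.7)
The plan is to adapt the classical Singleton-bound argument to the graph-distance setting: show that restricting codewords to a sufficiently large induced subgraph yields an injective map, and then count the possible restrictions.

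Concretely, let $d = \lceil \delta n \rceil$ denote the distance of $C$, and fix any subset $T \subseteq [n]$ of size $|T| = n - d + 1$, letting $S = [n] \setminus T$, so $|S| = d-1$. I claim the restriction map $\rho: C \to \{\text{graphs on } T\}$ defined by $\rho(G) = G[T]$ is injective. Indeed, if $G, H \in C$ satisfy $\rho(G) = \rho(H)$, then $G[\overline{S}] = H[\overline{S}]$, but $|S| = d-1 < d$, which by \Cref{def:graph-dist} forces $\dgraph(G, H) \leq d - 1$, contradicting the assumption that $C$ has distance at least $d$ unless $G = H$.

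Since $\rho$ is injective, $|C|$ is at most the number of simple graphs on $T$, which is $2^{\binom{|T|}{2}} = 2^{\binom{n - d + 1}{2}}$. Taking $\log_2$ and using $d \geq \delta n$ (so that $n - d + 1 \leq n(1 - \delta) + 1$) gives
\[
\log_2 |C| \;\leq\; \binom{n - d + 1}{2} \;\leq\; \binom{n(1 - \delta) + 1}{2},
\]
which is the claimed bound on the dimension.

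There is essentially no obstacle here; the argument is a direct translation of the classical Singleton bound, with the role of ``puncturing $d-1$ coordinates'' played by ``deleting $d-1$ vertices.'' The only subtlety worth flagging is that the definition of $\dgraph$ requires identical (not isomorphic) induced subgraphs on $\overline{S}$, which is precisely what makes $\rho$ well-defined as a map into labeled graphs on $T$ and hence what makes the counting step work.
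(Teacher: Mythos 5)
Your proof is correct and is essentially the same Singleton-type argument the paper gives: delete $d-1$ vertices and observe that the induced subgraphs must be distinct, hence restriction is injective and the code is bounded by the number of labeled graphs on the remaining $n-d+1$ vertices. The only cosmetic difference is that the paper phrases the contradiction via "otherwise $A$ is a vertex cover of $G_1 \Delta G_2$," which is the same fact stated through the vertex-cover characterization of $\dgraph$.
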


\begin{proof}
    Consider any graph code $C$ of relative distance $\delta$. Let $A \subset [n]$ be any subset of at most $\delta n - 1$ vertices. For any two distinct $G_1,G_2\in C$, we have that the graphs induced on the vertices outside $A$, $G_1[[n] \setminus A]$ and $G_2[[n] \setminus A]$, are different. Indeed, since otherwise, $A$ is a vertex cover of $G_1 \Delta G_2$, contradicting the relative distance assumption. So, we have that 
    $$|C| \leq 2^{\binom{n(1-\delta) + 1}{2}}.$$
\end{proof}

Expressed in terms of rate and distance, \Cref{prop:ub} implies that for constant relative distance $\delta$, $R \leq (1 - \delta)^2 + O(1/n)$.

\section{Existence of optimal graph codes}\label{sec:combinatorics}

As with other objects in the theory of error-correcting codes, the first question we seek to answer relates to the optimal rate-distance tradeoff. 

In contrast to the Hamming world, we find that random linear codes meet the Singleton bound in the graph distance.

\begin{proposition}\label{prop:lb}
    Let $\delta \in (0,1)$. Then, there exists a linear graph code with distance greater than $\delta$ and dimension at least 
    \[
    \max\left\{\binom{n(1 - \delta)}{2} - h_2(\delta)n - 2,0\right\}.
    \]
\end{proposition}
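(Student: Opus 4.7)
The plan is a direct application of the probabilistic method to a uniformly random $\F_2$-linear subspace of the adjacency-matrix space $\F_2^{\binom{n}{2}}$. Since the code is linear, the distance condition reduces to requiring that every \emph{nonzero} codeword $G \in C$ satisfies $\dgraph(G, 0) > \delta n$. By \Cref{prop:dgraph-characterizations}(1), this is equivalent to saying that every nonzero $G \in C$ has no independent set of size $s := \lceil n(1-\delta) \rceil + 1$, since such an independent set would certify a vertex cover of size at most $n - s < \delta n$.

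Next I would upper bound the number of ``bad'' adjacency matrices, namely graphs on $[n]$ that contain some independent set of size $s$. A union bound over the $\binom{n}{s}$ choices of vertex set, together with the observation that forcing an independent set on a prescribed $s$-subset leaves only $\binom{n}{2} - \binom{s}{2}$ free edges, gives
\[
    |\mathrm{Bad}| \leq \binom{n}{s} \cdot 2^{\binom{n}{2} - \binom{s}{2}} \leq 2^{h_2(\delta) n} \cdot 2^{\binom{n}{2} - \binom{n(1-\delta)}{2}},
\]
using the standard entropy estimate $\binom{n}{s} \leq 2^{h_2(\min(s, n-s)/n) n}$ and the symmetry $h_2(1-\delta) = h_2(\delta)$.

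Now draw $C$ uniformly among $k$-dimensional subspaces of $\F_2^{\binom{n}{2}}$. A standard Gaussian-binomial count gives, for any fixed nonzero $v$, that $\Pr[v \in C] = (2^k - 1)/(2^{\binom{n}{2}} - 1) \leq 2^{k - \binom{n}{2} + 1}$. By linearity of expectation, the expected number of nonzero bad graphs contained in $C$ is at most
\[
    |\mathrm{Bad}| \cdot 2^{k - \binom{n}{2} + 1} \leq 2^{k - \binom{n(1-\delta)}{2} + h_2(\delta) n + 1}.
\]
Choosing $k = \binom{n(1-\delta)}{2} - h_2(\delta) n - 2$ makes this at most $1/2 < 1$, so there exists a $k$-dimensional subspace containing no bad vector, giving the desired linear graph code.

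There is no genuine obstacle here --- this is a textbook first-moment argument. The only mild subtlety is the rounding when $n(1-\delta)$ is non-integral, plus the ``$-2$'' in the dimension bound, which is precisely what absorbs the small constants produced by the $+1$ in the subspace-containment probability and by taking $s = \lceil n(1-\delta) \rceil + 1$ rather than $n(1-\delta)$.
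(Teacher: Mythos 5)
Your proposal is correct and follows essentially the same first-moment argument as the paper: the same union bound over $\binom{n}{s}$ vertex subsets times $2^{\binom{n}{2}-\binom{s}{2}}$ completions, the same exponent $\binom{n(1-\delta)}{2}-h_2(\delta)n$, the only real difference being the random-code model --- you sample a uniformly random $k$-dimensional subspace and count expected bad codewords via $\Pr[v\in C]\le 2^{k-\binom{n}{2}+1}$, while the paper takes the span of $k$ i.i.d.\ uniform random adjacency matrices and unions over nonzero coefficient vectors $\va$; your variant gets the dimension bound for free, whereas the paper needs the extra one-line observation that the generators are linearly independent (a nontrivial relation would make some $H_{\va}$ the empty graph, which has a small vertex cover). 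One small slip worth fixing: the ``equivalence'' with having no independent set of size $s=\lceil n(1-\delta)\rceil+1$ is only one implication --- that choice of $s$ guarantees minimum vertex cover only $\ge n-\lceil n(1-\delta)\rceil$, which can fall a vertex short of $\delta n$; take $s=\lceil n(1-\delta)\rceil$ instead (the paper effectively uses $s=n(1-\delta)$, implicitly treating it as an integer), and the identical calculation still closes.
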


\newcommand{\va}{\vec{\alpha}}
\newcommand{\vb}{\vec{\beta}}
\newcommand{\vz}{\vec{0}}

\begin{proof}
    We only consider the case when $\binom{n(1 - \delta)}{2} - h_2(\delta)n - 2>0$, and prove this by a probabilistic construction. 
    Let $\mathbf{G}_{n,1/2}$, be the Erd\"{o}s-R\'{e}nyi random graph distribution where the vertices are $[n]$, and each of the $\binom{n}{2}$ possible edges are selected independently with probability 1/2.
    Let $k = \binom{n(1 - \delta)}{2} - h_2(\delta)n - 2$, and let $G_1,...,G_k$ be graphs chosen independently from $\mathbf{G}_{n, 1/2}$.
    Consider the $\mathbb{F}_2$-linear space $C = \mathrm{span} \{A_{G_1},\ldots,A_{G_k}\}$.
    We wish to show that $C$ has distance at least $\delta n$ with high probability.

    For $\va \in \F_2^k$, let $H_{\va}$ be the graph with the adjacency matrix $\sum_{i=1}^k \alpha_i A_{G_i}$.
    Recalling the definition of graph distance, we need that for any distinct $\va, \vb \in \F_2^k$, $H_{\va} \Delta H_{\vb}$ must have minimum vertex cover size at greater than $\delta n$.
    Since $H_{\va} \Delta H_{\vb} = H_{\va - \vb}$, it suffices to show that for any non-zero $\va \in \F_2^k$, that $H_{\va}$ has no vertex cover of size $\delta n$.

    For any $\vec{\alpha} \in \mathbb{F}_2^{k} \setminus \{\vec{0}\}$, $H_{\vec{\alpha}}$ has the same law as $\mathbf{G}_{n,\frac{1}{2}}$. 
    Let $B_{\va}$ be the event that $H_{\vec{\alpha}}$ has a vertex cover of size $\delta n$. 
    We have
    \begin{align*}
        \Pr(B_{\va}) &= \Pr\left(\exists S \in \binom{[n]}{\delta n}: G[[n] \setminus S] = \text{ is the empty graph} \right)\\
                      &\leq \binom{n}{\delta n}\cdot 2^{-\binom{n(1-\delta)}{2}} \\
                      &\leq 2^{-\binom{n(1 - \delta)}{2} + h_2(\delta) n},
    \end{align*}
    where the first inequality uses the union bound over subsets of size $\delta n$, and the fact that there are $\binom{n(1 - \delta)}{2}$ edges outside of a vertex cover that all have to be unselected.
    Then, union bounding over the choices of $\vec{\alpha}$, we get
    \begin{align*}
        \Pr\left(\bigcup_{\alpha \in \F_2^k \setminus \{\vz\}}  B_{\va}\right) \leq 2^{k-\binom{n(1 - \delta)}{2} + h_2(\delta) n} \leq 1/4.
    \end{align*}

    Therefore, with probability at least $3/4$, there does not exist $\alpha \in \F_2^k$, such that $\alpha \neq \vz$, and $H_\alpha$ has a vertex cover of size $\delta n$, and hence $C$ is a code with relative distance greater than $\delta$.

    Finally, note that if $H_\alpha$ does not have a vertex cover of size $\delta n$, then $H_\alpha$ is not the empty graph.
    Thus, the fact that there does not exist $\alpha \in \F_2^k$, $\alpha \neq \vec{0}$ such that $H_\alpha$ has a vertex cover of size $\delta n$ implies that $G_1,..., G_k$ are linearly independent.
    Hence, the dimension of $C$ is $k = \binom{n(1 - \delta)}{2} - h_2(\delta)n - 2$, as required.
\end{proof}

As a result, we have the following corollaries.

\begin{corollary}
    For any constant $\delta \in (0,1)$, there exist optimal linear graph codes with relative distance at least $\delta$.
\end{corollary}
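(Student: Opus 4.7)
The corollary is an immediate asymptotic consequence of pairing the upper bound in \Cref{prop:ub} with the lower bound in \Cref{prop:lb}, so the plan is simply to compare the two expressions and verify that they agree up to lower-order terms.

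First I would apply \Cref{prop:lb} directly: for any fixed $\delta \in (0,1)$ and all sufficiently large $n$, there is a linear graph code $C$ with relative distance at least $\delta$ and dimension
\[
    k \;\geq\; \binom{n(1-\delta)}{2} - h_2(\delta)\, n - 2.
\]
Since $\delta$ is a constant, $h_2(\delta)$ is a constant, so the subtracted term is $O(n)$, while $\binom{n(1-\delta)}{2} = \tfrac{(1-\delta)^2}{2} n^2 - O(n)$. Dividing by $\binom{n}{2} = \tfrac{1}{2} n^2 - O(n)$ gives the rate
\[
    R_C \;\geq\; (1-\delta)^2 - o(1).
\]

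Next I would invoke \Cref{prop:ub}, which gives the matching upper bound $R \leq (1-\delta)^2 + O(1/n)$ for any graph code of relative distance $\delta$. Combining these two inequalities shows that the linear code produced by \Cref{prop:lb} achieves the optimal rate-distance tradeoff up to additive $o(1)$ in the rate, which is the sense in which the corollary asserts optimality. No step here presents any real obstacle; the only thing to be slightly careful about is that \Cref{prop:lb} supplies distance strictly greater than $\delta$, so choosing any target $\delta' < \delta$ (or simply applying the proposition with the given $\delta$) guarantees a code of relative distance $\geq \delta$, as the statement demands.
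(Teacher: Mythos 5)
Your proof is correct and matches the paper's approach: the corollary is stated without proof as an immediate consequence of Proposition~\ref{prop:lb} combined with the Singleton-type upper bound of Proposition~\ref{prop:ub}, which is exactly the comparison you carry out.
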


\begin{corollary}
    For any constant $c > 2$, there exists a linear graph code with dimension at least $\Omega(\log^2 n)$ and relative distance at least $\delta = 1 - c\cdot \frac{\log n}{n}$.
\end{corollary}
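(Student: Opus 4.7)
The plan is to apply Proposition~\ref{prop:lb} directly with $\delta = 1 - c \cdot \frac{\log n}{n}$, and then estimate the two competing terms in the dimension bound $\binom{n(1-\delta)}{2} - h_2(\delta)n - 2$.

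First I would compute the positive contribution. With $n(1-\delta) = c \log n$, we have
\[
\binom{n(1-\delta)}{2} = \binom{c \log n}{2} = \frac{c^2}{2}\log^2 n - \frac{c}{2}\log n.
\]
Next I would estimate $h_2(\delta) n$. Since $\delta$ is close to $1$, write $x := 1 - \delta = \frac{c \log n}{n}$ so that $h_2(\delta) = h_2(x)$. Using the standard bound $h_2(x) \le x \log_2(1/x) + \frac{x}{\ln 2}$ valid for small $x$, and plugging in $x = c \log n / n$, I would obtain
\[
h_2(\delta)\, n \;\le\; c \log n \cdot \log_2\!\left(\tfrac{n}{c \log n}\right) + O(\log n) \;=\; c \log^2 n + O(\log n \cdot \log\log n).
\]

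Subtracting the two expressions yields
\[
\binom{n(1-\delta)}{2} - h_2(\delta) n - 2 \;\ge\; \left(\tfrac{c^2}{2} - c\right)\log^2 n - O(\log n \cdot \log\log n) \;=\; \tfrac{c(c-2)}{2}\log^2 n - o(\log^2 n),
\]
which is $\Omega(\log^2 n)$ precisely because $c > 2$ makes the leading coefficient $\tfrac{c(c-2)}{2}$ a positive constant. Proposition~\ref{prop:lb} then produces a linear graph code of this dimension with relative distance at least $\delta = 1 - c\log n/n$, establishing the corollary.

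There is really no obstacle here — the proof is a direct substitution. The only mildly delicate step is the entropy estimate, where one must keep track of the $\log_2(n/(c\log n))$ factor carefully enough to see that $h_2(\delta) n$ has leading term exactly $c \log^2 n$ (and not a larger multiple of $\log^2 n$); this is what pins down the threshold $c = 2$ appearing in the statement.
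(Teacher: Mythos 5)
Your proof is correct and is precisely the intended calculation behind the corollary (the paper simply states it as a consequence of Proposition~\ref{prop:lb} with no further proof). You substitute $\delta = 1 - c\log n/n$, get $\binom{c\log n}{2} = \tfrac{c^2}{2}\log^2 n - O(\log n)$ from the positive term, and bound $h_2(\delta)n \le c\log^2 n + O(\log n\log\log n)$ from the entropy term, so the leading coefficient $\tfrac{c^2}{2} - c$ is positive exactly when $c>2$. One small observation: your estimate $h_2(\delta)n \le c\log^2 n + O(\log n\log\log n)$ is actually slightly wasteful, since $c\log n\cdot\log_2\!\left(\tfrac{n}{c\log n}\right) = c\log^2 n - c\log n\cdot\log(c\log n)$ is strictly below $c\log^2 n$; this does not affect the conclusion, but it is worth noting that the threshold $c=2$ (rather than something larger) hinges on the implicit convention that $\log$ in the corollary means $\log_2$, consistent with the paper's definition of $h_2$.
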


\section{Explicit graph codes for high distance: Concatenated Codes}\label{sec:concatenation}

To get explicit codes of distance $\delta > 1/2$, we start with Construction 3 of \cite{yohananovCodesGraphErasures2019}.
The construction utilizes the tensor product code introduced by \cite{wolfCodesDerivableTensor1965}, where elements of the code are matrices where all rows and columns are codewords over some base Hamming code. 
The elements of the tensor code are then the adjacency matrices of graphs. 
Since we consider undirected graphs and do not allow self-loops, we take the subcode of the tensor code containing only symmetric matrices with zeros on the diagonal.

\begin{definition}[Symmetric Tensor Code with Zeros on the Diagonal]\label{def:stczd}
	Let $C$ be a code over $\F_q$. The symmetric tensor code with zeros on the diagonal built on $C$ denoted $\STCZD(C)$ is the set of matrices $A$ over $\F_q^{n\times n}$ such that (1) $A$ is symmetric, (2) the rows and columns of $A$ are codewords of $C$, and (3) the entries on the diagonal are all 0.
\end{definition}

Properties of elements of Tensor Product Codes that are symmetric and zero-diagonal were also previously studied in the context of constructing a gap-preserving reduction from SAT to the Minimum Distance of a Code problem by Austrin and Khot \cite{austrinSimpleDeterministicReduction2014}.

We will also define another notion of distance that will be useful later.

\begin{definition}[Directed graph distance]
    Let $A$, and $B$ be $n\times n$ matrices over some field. Define the \emph{directed graph distance} denoted $\dab(G, H)$ to be the minimum $d$ such that there exists sets $S, T \subset [n]$ of size $d$ where $(A - B)_{\overline{S}, \overline{T}} = \mathbf{0}$.
\end{definition}

For weighted, directed graphs, $G$, and $H$, abbreviate $\dab(A_G, A_H) = \dab(G, H)$. To better distinguish between $\dab$ and $\dgraph$, we sometimes refer to $\dgraph$ as the \emph{undirected} graph distance.

When $G$ and $H$ are weighted directed graphs, $\dab(A_G, A_H)$ can be viewed as the minimum $d$ such that you can go from $G$ to $H$ by editing the incoming edges of $d$ vertices and the outgoing edges of $d$ vertices.
The main difference between directed and undirected graph distance is that directed graph distance allows the subset of rows and the subset of columns to be edited to be different. 
Insisting that $S = T$ in the definition for directed graph distance recovers the undirected graph distance. 
From this, it easily follows that if $G$ and $H$ are undirected graphs, then $\dab(G, H) \leq \dgraph(G, H)$. 
Thus, to find codes with high graph distance, it suffices to find codes with large directed graph distance, where all the elements are adjacency matrices of undirected, unweighted graphs (i.e., 0/1 matrices that are symmetric and zero diagonal). 
Note that when discussing rate directed graph codes $C$, we are referring to the quantity $\log_q(|C|)/n^2$ instead of $\log_q(|C|)/\binom{n}{2}$.

In the next lemma, we show several properties of $\STCZD(C)$. 
Most importantly, the Hamming distance of $C$ translates to the directed graph distance of $\STCZD(C)$.

\begin{lemma}\label{lem:symmetric-tensor-properties}
    Let $C$ be a linear $[n, k, d]_q$-code, then $\STCZD(C) \subset \F_q^{n \times n}$ is linear, has dimension at least $\binom{k+1}{2}-n$, and has directed graph distance $d$.
\end{lemma}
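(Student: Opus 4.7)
My plan is to verify linearity, the dimension bound, and the distance claim in turn. Linearity is immediate, since each of the three defining conditions of $\STCZD(C)$---symmetry, having every row and column lie in $C$, and vanishing on the diagonal---is cut out by linear equations over $\F_q$, so $\STCZD(C)$ is the intersection of linear subspaces of $\F_q^{n \times n}$.

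For the dimension bound, I would first identify the symmetric part of the tensor code $C \otimes C$. Fixing a basis $v_1, \ldots, v_k$ of $C$, the matrices $\{v_i v_j^T\}_{i,j \in [k]}$ form a basis for $C \otimes C$, and writing $A = \sum_{i,j} a_{ij}\, v_i v_j^T$ the condition $A = A^T$ translates (by linear independence) to $a_{ij} = a_{ji}$ for all $i,j$. Hence the symmetric subspace of $C \otimes C$ has dimension $\binom{k+1}{2}$, spanned by $\{v_i v_i^T\}_i \cup \{v_i v_j^T + v_j v_i^T\}_{i < j}$. The zero-diagonal constraint then imposes $n$ further linear equations (one per diagonal entry), so it can reduce the dimension by at most $n$, yielding $\dim(\STCZD(C)) \geq \binom{k+1}{2} - n$.

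The distance bound is the main point and uses the classical tensor-code peeling argument. Suppose for contradiction that some nonzero $A \in \STCZD(C)$ satisfies $\dab(A, 0) < d$, so there are sets $S, T \subseteq [n]$ with $|S|, |T| < d$ and $A_{\overline{S}, \overline{T}} = 0$. For any $i \notin S$, row $i$ of $A$ lies in $C$ and is supported inside $T$, so it has Hamming weight at most $|T| < d$ and hence must vanish; thus every row outside $S$ is zero. Now for any $j \notin T$, column $j$ lies in $C$ (using the column-containment condition) and is now supported inside $S$, so it too has weight less than $d$ and must vanish; so every column outside $T$ is zero as well. Finally, for each $i \in S$, row $i$ is a codeword of $C$ whose support is contained in $T$, again of weight less than $d$, and so must be zero. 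This forces $A = 0$, a contradiction. I expect this two-round peeling---first rows, then columns, then rows again---to be the only substantive step in the proof; note that the symmetry and zero-diagonal conditions play no role in the distance argument itself and serve only to carve out the ambient subspace so the matrices are adjacency matrices of simple graphs.
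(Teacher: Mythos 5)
Your proof is correct, and it departs from the paper's route in both halves of the argument. For the dimension bound, the paper assumes a systematic generator matrix $G=[I\mid A]$, parameterizes $C\otimes C$ as $\{G^TXG : X\in\F_q^{k\times k}\}$, observes $G^TXG$ is symmetric iff $X$ is, and then counts the zero-diagonal constraints block by block (the $k$ from the top-left identity block plus $n-k$ from the $A^TXA$ block); you instead work with an arbitrary basis $v_1,\ldots,v_k$ of $C$, expand in the tensor basis $\{v_iv_j^T\}$, and read off the symmetric subspace directly as $\binom{k+1}{2}$-dimensional before subtracting at most $n$ for the diagonal conditions. Both give $\binom{k+1}{2}-n$; yours avoids the systematic-form normalization and the two-part constraint count at the cost of needing $\{v_iv_j^T\}$ to be a basis of $C\otimes C$ (a standard fact, but worth a word). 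For the distance, the paper runs the argument ``forward'': starting from a nonzero entry $A_{ij}$, it uses the row distance to find a nonzero $A_{ij'}$ with $j'\notin T$, then the column distance to find a nonzero $A_{i'j'}$ with $i'\notin S$, exhibiting a nonzero entry of $A_{\overline S,\overline T}$ in two steps. You argue the contrapositive by peeling: rows outside $S$ vanish, then columns outside $T$ vanish, then rows inside $S$ vanish. This is equally valid (and your first peeling round is actually redundant, since columns outside $T$ are already supported on $S$ by the hypothesis alone), just a touch longer. Your closing remark that symmetry and the zero diagonal play no role in the distance argument is exactly right and is the observation the paper later uses to replace $\STCZD$ with the plain tensor code $\mathrm{TC}$ in the intermediate layers of the concatenation.
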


\begin{proof}
	Let $C$ be a linear $[n, k, d]_q$-code, and let $C'=\STCZD(C)$. $C'$ is linear because $C$ is linear, and the sum of symmetric matrices is symmetric.

    WLOG, we assume that $C$ is systematic, i.e., it has $k \times n$ generator matrix $G = [I|A]$, where $I$ is the $k \times k$ identity and $A$ is a $k \times (n-k)$ matrix. Then, for every $X \in \F_q^{k \times k}$, the following has rows and columns belonging to $C$
	\[
		G^T X G =
		\begin{bmatrix}
			X    & XA    \\
			A^TX & A^TXA \\
		\end{bmatrix}.
	\]
	Furthermore, $G^TXG$ is symmetric and has zeros on the diagonal iff $X$ is symmetric, $X$ has zeros on the diagonal, and $A^TXA$ has zeros on the diagonal. 
    This imposes $\binom{k+1}{2} + (n-k)$ linear constraints on the entries of $X$. 
    Thus, the subspace of $X$ for which $G^TXG \in C'$ has dimension at least $k^2 - \binom{k+1}{2} - (n-k) = \binom{k+1}{2}-n$. 

	Since $C'$ is linear, to show the distance property, it suffices to show that $\dab(A, \mathbf{0}) \geq d$ for every non-zero $A \in C'$. 
    Let $A \in C'$ be a non-zero element of $C'$, we'll show that for any $S, T \subset [n]$, with $|S| < d$, and $|T|<d$, $A_{\overline{S}, \overline{T}} \neq \mathbf{0}$.

	Since $A$ is non-zero, there is some non-zero entry $A_{ij}$. 
    Since the rows are elements of a linear code of distance $d$, the Hamming weight of the $i$th row is at least $d$. 
    Since $|T| < d$, there is some $j' \notin T$ such that $A_{ij'}$ is non-zero. 
    Then, the $j'$th column is also a non-zero codeword of $C$, so it has a Hamming weight of at least $d$. 
    Since $|S| < d$, there is some $i' \notin S$ such that $A_{i'j'}$ is non-zero. 
    Thus, $A_{\overline{S}, \overline{T}} \neq \mathbf{0}$.
\end{proof}

\begin{remark}
A simple calculation shows that if $C$ has constant rate, $R$, then $\STCZD(C)$ has rate $R^2/2-o(1)$ as a directed graph code.
\end{remark}

Given this lemma (and using the fact that $\dgraph \geq \dab$), for any \textbf{binary} code $C \subset \F_2^{n}$ with rate $R$ and relative distance $\delta$, $\STCZD(C)$ is a (undirected) graph code with rate $R^2 - o(1)$, and relative distance $\delta$. 
Thus, if $C$ has rate distance tradeoff $R = f(\delta)$, then $\STCZD(C)$ has rate distance tradeoff $R = f(\delta)^2 - o(1)$.  
Immediately, we get that taking the $\STCZD$ of any asymptotically good binary code yields an asymptotically good graph code.

There are two problems with this construction. 
Firstly, these codes may not be strongly explicit. 
Secondly, the Plotkin bound \cite{plotkinBinaryCodesSpecified1960} implies that any binary code with distance $>1/2$ has vanishing rate. 
So this falls short of our goal of obtaining strongly explicit, asymptotically good codes with $\delta > 1/2$. 

We will address the first problem by showing that if the base code is a Reed Solomon code \cite{reedPolynomialCodesCertain1960}, then there is a large subcode that is strongly explicit.

\begin{code}[Reed Solomon Code $\mathrm{RS}(n, R, q)$ \cite{reedPolynomialCodesCertain1960}]\label{code:RS}
The Reed Solomon Code with parameters $n$, $R$, and $q$, where $q \geq n$, is a code over $\F_q^n$ with rate $R$ and distance $1 - R$.
\end{code}

\begin{lemma}\label{thm:stczd-rs-se}
    Let $C \in \mathrm{RS}(n,R,q)$ where $Rn = k-1$. 
    Then, there exists a \emph{strongly explicit} subcode $S \subset \STCZD(C)$ such that the dimension of $S$ is at least $\binom{k-1}{2}$.
\end{lemma}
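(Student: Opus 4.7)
The plan is to realize $C$ and its tensor product $C \otimes C$ in their natural polynomial-evaluation descriptions, and then cut out a subspace of $\STCZD(C)$ by imposing symmetry and vanishing on the diagonal at the polynomial level rather than merely on the evaluation grid. Fix an evaluation set $T \subseteq \F_q$ with $|T| = n$, so that $C$ consists of evaluations on $T$ of univariate polynomials of degree at most $k-1$; then the elements of $C \otimes C$ are exactly the matrices $(P(x,y))_{x,y \in T}$ for bivariate polynomials $P(X,Y) \in \F_q[X,Y]$ of bidegree at most $(k-1, k-1)$. The guiding observation is that if $P$ is symmetric and vanishes identically on the diagonal $X=Y$, then $(X-Y)\mid P$, and symmetry forces the quotient to be antisymmetric, so a second factor of $(X-Y)$ must appear; thus $P = (X-Y)^2 Q$ for some symmetric polynomial $Q$.

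Motivated by this, I propose the subcode $S \subset \STCZD(C)$ given by the evaluations on $T \times T$ of
\[
\mathcal{P} = \{(X-Y)^2 \cdot Q(X,Y) : Q \in \F_q[X,Y] \text{ symmetric with bidegree at most } (k-3, k-3)\}.
\]
Three quick checks confirm $S \subset \STCZD(C)$: every row of the resulting matrix is the evaluation on $T$ of the univariate polynomial $P(x,Y)$ of $Y$-degree at most $k-1$ (and similarly for columns), so rows and columns lie in $C$; the polynomial $(X-Y)^2 Q$ is symmetric in $X$ and $Y$ because both factors are; and $(X-Y)^2$ vanishes identically on the diagonal, killing the diagonal entries of the matrix. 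A natural basis for the symmetric $Q$ of bidegree at most $(k-3,k-3)$ is $\{X^i Y^j + X^j Y^i : 0 \leq i < j \leq k-3\} \cup \{X^i Y^i : 0 \leq i \leq k-3\}$, of size $\binom{k-2}{2} + (k-2) = \binom{k-1}{2}$. Multiplication by $(X-Y)^2$ is injective on $\F_q[X,Y]$, and evaluation on $T \times T$ is injective on bivariate polynomials of bidegree at most $(k-1,k-1)$ as long as $|T| \geq k$, so these basis polynomials produce $\binom{k-1}{2}$ linearly independent matrices in $\STCZD(C)$.

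For strong explicitness, index the basis by pairs $(i,j)$ with $0 \leq i \leq j \leq k-3$; the $(x,y)$-entry of the corresponding basis matrix equals $(x-y)^2(x^i y^j + x^j y^i)$ when $i < j$ and $(x-y)^2 x^i y^i$ when $i = j$. Given a $\poly(\log n)$-time procedure to map row/column indices to their corresponding elements of $T$ (standard when $T$ is a canonically enumerated subset of $\F_q$), each such entry can be computed in $\poly(\log k) \cdot \poly(\log q) = \poly(\log n)$ time via repeated squaring and standard $\F_q$ arithmetic. No step is delicate; the only conceptual ingredient is the $(X-Y)^2$ factorization for symmetric zero-diagonal bivariate polynomials, and everything else is degree bookkeeping and linear algebra.
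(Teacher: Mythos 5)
Your proof is essentially the same as the paper's: both define the subcode as evaluations of $(X-Y)^2 Q(X,Y)$ for symmetric $Q$ of individual degree at most $k-3$, use the identical basis $\{X^iY^j + X^jY^i : 0 \le i < j \le k-3\} \cup \{X^iY^i : 0 \le i \le k-3\}$, and arrive at $\binom{k-2}{2} + (k-2) = \binom{k-1}{2}$. Your additions — the explicit injectivity argument for linear independence, the remark that $(X-Y)^2$ is forced for any symmetric zero-diagonal bivariate polynomial, and the spelled-out $\poly(\log n)$ entry computation — are welcome extra rigor, but the construction and the count are the paper's.
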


\begin{proof}
    Essentially, we will evaluate symmetric polynomials that are a multiple of $(X-Y)^2$ on a $n \times n$ grid. 

    Suppose $h(X, Y)$ is a symmetric polynomial of individual degree at most $k - 3$, and let $M$ be the evaluations of $f(X, Y) = (X - Y)^2h(X, Y)$ on a $n \times n$ grid. 
    $M$ is symmetric and has zeros on the diagonal. 
    Furthermore, for a fixed value, $y$, $f(X, y)$ is a univariate polynomial in $X$ of degree at most $k-1$, and hence the column indexed by $y$ is an element of a Reed Solomon code of dimension $k$, and block length $n$. 
    Similarly, the rows are also elements of the same code. Thus $M \in \STCZD$.

    Let $S$ be the space of bivariate symmetric polynomials of degree at most $k-3$. 
    For $a, b \in \N$, define polynomials $p_{a,b}(X, Y) = X^aY^b + X^bY^a$. 
    Notice that $p_{a, b}$ is symmetric, and furthermore the set 
    $$\{p_{a, b}: 0 \leq a < b \leq k - 3\} \cup \{X^iY^i: i \in \{0,1,...,k - 3\}\},$$
    is linearly independent. 
    Thus $\dim(S) = \binom{k - 2}{2} + k -2  = \binom{k - 1}{2}$, as desired.
    
\end{proof}

To extend this construction to the setting of $\delta > 1/2$, we use the concatenation paradigm from standard error-correcting code theory, initially introduced by Forney \cite{forney1965concatenated}. 

We will start with a code over a large alphabet and then concatenate with an inner code, which will be an optimal directed graph code. 

\begin{lemma}\label{lem:inner}
For any $\epsilon > 0$, and sufficiently large $n$, for any $k < \epsilon^2n^2 - 2n$, there exists a linear directed graph code over $\F_2$ of dimension $k$ and distance at least $(1 - \epsilon)n$.
\end{lemma}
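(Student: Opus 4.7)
The plan is to adapt the probabilistic argument of Proposition~\ref{prop:lb} to the directed setting. I would sample $G_1, \ldots, G_k$ independently and uniformly from $\F_2^{n \times n}$ (each of the $n^2$ entries being an i.i.d.\ uniform $\F_2$ bit) and set $C = \mathrm{span}(G_1, \ldots, G_k)$. Then for every nonzero $\alpha \in \F_2^k$, the codeword $H_\alpha = \sum_i \alpha_i G_i$ is itself uniformly distributed over $\F_2^{n \times n}$, and the task reduces to showing that with positive probability $\dab(H_\alpha, \mathbf{0}) \geq (1-\epsilon)n$ simultaneously for all such $\alpha$.

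Unpacking the definition, $\dab(H_\alpha, \mathbf{0}) \geq (1-\epsilon)n$ is equivalent to saying that $H_\alpha$ contains no all-zero submatrix with more than $\epsilon n$ rows and more than $\epsilon n$ columns. Setting $m = \lfloor \epsilon n \rfloor + 1$, it therefore suffices to rule out all-zero $m \times m$ submatrices (any larger forbidden submatrix contains one of this size). A fixed pair of index sets of size $m$ gives an all-zero submatrix with probability exactly $2^{-m^2}$, and there are $\binom{n}{m}^2 \leq 2^{2n}$ such pairs. A union bound gives
\[
    \Pr\big[\dab(H_\alpha, \mathbf{0}) < (1-\epsilon)n\big] \;\leq\; 2^{2n - m^2} \;\leq\; 2^{2n - \epsilon^2 n^2}.
\]

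A second union bound over the $2^k - 1$ nonzero $\alpha$ then yields total failure probability at most $2^{k + 2n - \epsilon^2 n^2}$, which is strictly less than $1$ whenever $k < \epsilon^2 n^2 - 2n$. Hence for sufficiently large $n$, some realization $(G_1, \ldots, G_k)$ works. As in the proof of Proposition~\ref{prop:lb}, this same event forces each $H_\alpha$ with $\alpha \neq \mathbf{0}$ to be nonzero, so the $G_i$'s are linearly independent and $C$ has dimension exactly $k$.

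I do not anticipate a substantive obstacle --- this is a standard random-linear-code argument. The only point worth flagging is that the directed graph distance allows two \emph{independently} chosen sets $S$ and $T$ when excising rows and columns, so the submatrix union bound contributes $\binom{n}{m}^2$ rather than $\binom{n}{m}$; this squared factor is exactly what produces the $-2n$ correction in the hypothesis $k < \epsilon^2 n^2 - 2n$, as opposed to the weaker $-n$ correction one would expect in an undirected-style argument.
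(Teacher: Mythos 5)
Your proof is correct and is exactly the argument the paper has in mind: the paper omits the proof of Lemma~\ref{lem:inner}, stating only that it is ``standard and similar to that of Proposition~\ref{prop:lb},'' and your adaptation (uniform random $n\times n$ matrices over $\F_2$, a union bound over all $m\times m$ zero submatrices with $m = \lfloor\epsilon n\rfloor + 1$, then a union bound over nonzero $\alpha$) is the natural directed analogue. Your closing remark correctly identifies why the bound has a $-2n$ correction rather than $-n$: the row-set $S$ and column-set $T$ are chosen independently, contributing $\binom{n}{m}^2 \le 2^{2n}$.
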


The proof is standard and similar to that of Proposition~\ref{prop:lb}. 
So we will omit it.

\begin{code}[Optimal Directed Graph Code $\mathrm{Opt}(\epsilon, n, k)$]\label{code:opt}
    Require $k < \epsilon^2n^2  - 2n$. 
    Refer to a code with the properties in \Cref{lem:inner} as $\mathrm{Opt}(\epsilon, n, k)$. 
\end{code}

\subsection{Symmetric concatenation}

Since our inner code is not guaranteed to be symmetric, simply replacing each field element in the outer code with its encoding might result in an asymmetric matrix. 
To remedy this, we transpose the encoding for entries below the diagonal. 
This is made formal below.

\begin{definition}[Symmetric Concatenation]\label{def:concat}
    Let $q, Q$ be prime powers, and $n, N$ be positive integers.  
    Let $C_{out} \subset \F_{Q}^{N \times N}$ and  $C_{in} \subset \F_{q}^{n \times n}$ such that $|C_{in}| = Q$. 
    Define $C_{in}\circ C_{out} \subset \F_{q}^{nN\times nN}$ to be the code obtained by taking codewords of $C_{out}$ and replacing each symbol of the outer alphabet with by their encodings under $C_{in}$ if they lie above or on the diagonal, and with the transpose of their encodings if they lie below the diagonal. 
\end{definition}

\begin{figure}[ht]
    \centering
    \includegraphics[width=\textwidth]{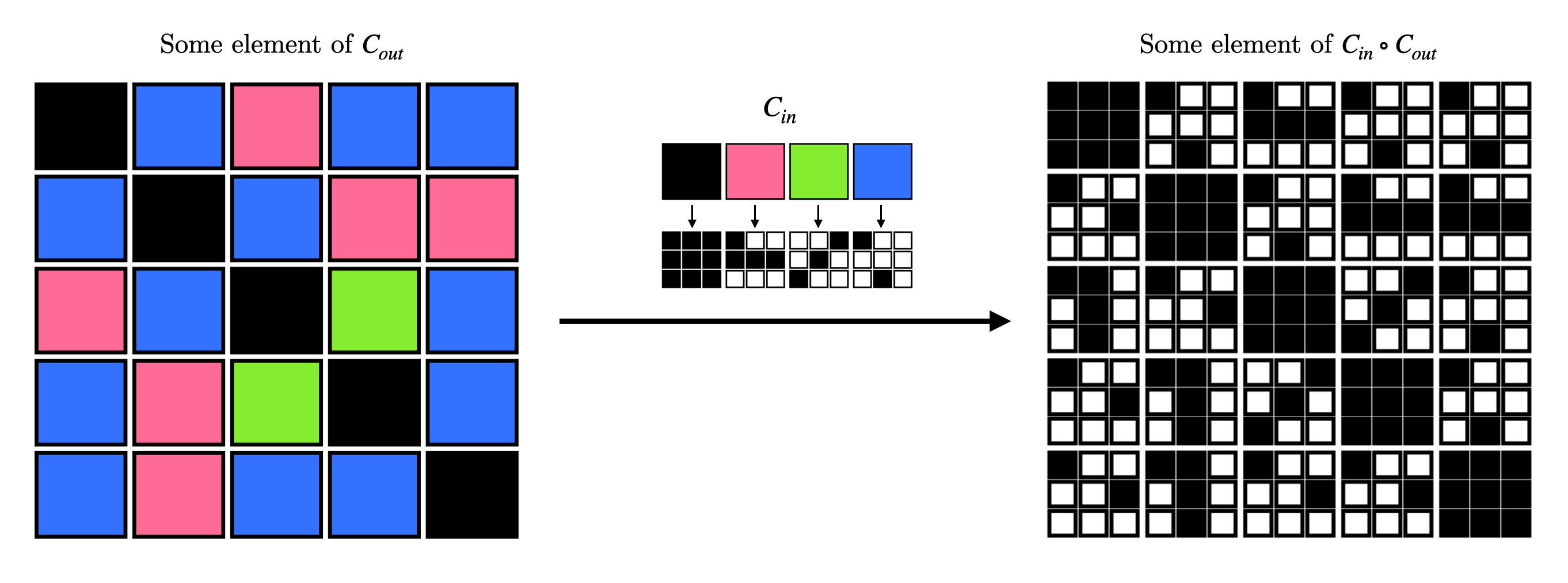}   
    \caption{
        Example of symmetric concatenation. 
        An outer codeword is shown on the left, with field elements represented as different colors. 
        The concatenation with the inner code is shown to the right. 
        Black squares represent $0$, and white squares represent $1$s.
    }
    \label{fig:sym-concat}
\end{figure}

\Cref{fig:sym-concat} visualizes an example of symmetric concatenation. 
We now show that distance and dimension concatenate exactly like it does for standard error-correcting codes.

\begin{lemma}\label{lem:concat}
    Suppose $C_{in}$ and $C_{out}$ are linear codes as in the previous definition with directed graph distance $d$ and $D$, respectively. 
    Let $k$ be the dimension of $C_{in}$, and $K$ be the dimension of $C_{out}$. 
    Note $|C_{in}| = q^{k} = Q$. 
    Then $C_{in} \circ C_{out}$ is linear and has distance at least $dD$, and dimension $kK$.
\end{lemma}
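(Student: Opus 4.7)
My plan is to verify the three claims (linearity, dimension, and directed graph distance) separately, following the template of concatenation arguments from classical coding theory, with one extra bookkeeping step to handle the transposition on sub-diagonal blocks.

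\textbf{Linearity and dimension.} The map $\Phi : C_{out} \to \F_q^{nN\times nN}$ defined by symmetric concatenation is built block by block: on the outer-block with index $(I,J)$, the image is $C_{in}(M_{I,J})$ if $I \le J$ and $C_{in}(M_{I,J})^T$ otherwise. Since $C_{in}$ is $\F_q$-linear (hence $C_{in}(\cdot)$ is $\F_q$-linear on $\F_Q \cong \F_q^k$), and transposition is $\F_q$-linear, and the decision whether to transpose depends only on the position, $\Phi$ is $\F_q$-linear on $C_{out}$. For injectivity, note $|C_{in}|=Q$ forces $C_{in}(\cdot)$ to be a bijection onto $C_{in}$, so $\Phi(M)=0$ forces $M_{I,J}=0$ for every $(I,J)$. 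Thus $\dim_{\F_q}(C_{in}\circ C_{out}) = \dim_{\F_q}(C_{out}) = kK$.

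\textbf{Distance, main argument.} By linearity it suffices to show that every nonzero $W = \Phi(M)$ has $\dab(W,\mathbf{0}) \ge dD$. Fix sets $S,T \subseteq [nN]$ with $|S|,|T| < dD$; I will produce an entry of $W$ outside $S\times T$ that is nonzero. For each outer row $I \in [N]$ let $\sigma(I)$ be the number of inner rows from the $I$-th outer-block that lie in $S$, and set $S' = \{I : \sigma(I) \ge d\}$; since $\sum_I \sigma(I) = |S| < dD$, pigeonhole gives $|S'| < D$. Define $T'$ analogously from $T$, with $|T'| < D$. Because $M$ is a nonzero codeword of $C_{out}$ and $\dab(C_{out}) \ge D$, there exist $I \notin S'$ and $J \notin T'$ with $M_{I,J} \ne 0$.

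\textbf{Reducing to the inner block.} The outer-block of $W$ at $(I,J)$ is either $C_{in}(M_{I,J})$ or $C_{in}(M_{I,J})^T$; both are nonzero (since $C_{in}(\cdot)$ is injective), and both have directed graph distance at least $d$, because $\dab$ is clearly invariant under transposition (swapping the roles of $S_0$ and $T_0$ in the definition). Within this block, at most $d-1$ rows are hit by $S$ (since $I \notin S'$) and at most $d-1$ columns are hit by $T$ (since $J \notin T'$); applying the directed distance of the (possibly transposed) inner block to these restricted sets produces a nonzero entry of the block with row index outside $S$ and column index outside $T$. This entry is a nonzero entry of $W_{\overline{S},\overline{T}}$, completing the proof.

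\textbf{Expected obstacle.} The only subtle point is the transposition on sub-diagonal blocks: one must check that transposing a codeword of $C_{in}$ neither hurts the inner directed distance nor breaks $\F_q$-linearity of $\Phi$. Both are immediate once noticed, but they are the reason the argument doesn't reduce to a black-box citation of ordinary concatenation. Everything else is a standard double pigeonhole on the two coordinate directions.
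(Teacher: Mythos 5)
Your proof is correct and follows essentially the same route as the paper's: the double counting argument producing $S'$ and $T'$ of size less than $D$, invoking the outer directed distance to find a nonzero outer symbol in an uncrowded block, then using the inner directed distance (noting its invariance under transposition) to locate a surviving nonzero entry. The paper is slightly terser on linearity (it simply remarks that an $\F_q$-linear identification of $\F_Q$ with $\F_q^k$ makes the whole map linear), but there is no substantive difference in approach.
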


\begin{proof}
Let $C = C_{in} \circ C_{out}$. 
First note that $C_{in} \circ C_{out}$ can be made linear by using a $\F_{q}$-linear map from $\F_{q^k}$ to $\F_q^{k}$ before encoding with the inner code.

Consider a non-zero outer codeword $O$, and let $A$ be the codeword after concatenation. 
Let $S, T \subset [nN]$ be of size less than $dD$. 
We'll show that $A_{\overline{S}, \overline{T}} \neq \mathbf{0}$. 
Partition $A$ into $N \times N$ blocks, where the $(I, J)$'th block for $I, J \in [N]$, is the $n \times n$ matrix encoding the symbol at $O_{IJ}$. 
Identify the indices $[nN]$ with $[N] \times [n]$ where the tuple $(I, i)$ corresponds to the $i$'th index in the $I$'th block.

For $I \in [N]$, let $S_I = \{i \in [n]: (I, i) \in S\}$ be the set rows in $S$ in the $I$'th block. 
Define $T_J$ similarly. 
Let $S_{\geq} = \left\{ I \in [N]: |S_I| \geq d \right\}$, be the set of blocks in which there are at least $d$ elements in $S$, and similarly define $T_{\geq}$, $S_{<}$, and $T_<$.
 
Since $\sum_{I \in [N]}|S_I| < dD$, $\sum_{J \in [N]}|T_J| < dD$, we have $|S_{\geq}| < D$, and $|T_{\geq}| < D$. 
Since the outer code has directed distance $D$, $O_{S_<, T_<} \neq \mathbf{0}$, so there exists $I \in S_<$, and $J \in T_<$ such that $O_{I,J}$ is non-zero. 
So, the $(I, J)$'th block of $A$ is a non-zero codeword or the transpose of a non-zero codeword of $C_{in}$. 
Let us call it $X$, and suppose that $X \in C_{in}$.

Since $|S_I| < d$, and $|T_J| < d$, and the inner code has distance at least $d$, we have that $X_{\overline{S_I}, \overline{T_J}} \neq \mathbf{0}$. 
To finish the proof, note that $\dab(X, \mathbf{0})=\dab(X^T, \mathbf{0})$ by switching the roles of $S$ and $T$ in the definition of directed graph distance. 

For the claim about dimension, note that the number of codewords in $C$ is the number of codewords in $C_{out}$, which is $Q^K$. 
The dimension of $C$ is then $K\log_q(Q) =Kk$. 

\end{proof}

Additionally, it is clear from the definition of symmetric concatenation that if $C_{out}$ is symmetric and zero-diagonal, so is $C_{in}\circ C_{out}$. 

\begin{remark}
    \Cref{lem:concat} also holds for the standard definition of concatenation (without transposing blocks below the diagonal). 
    However, we will not need this fact.
\end{remark}

\subsection{Concatenated graph codes}

We can instantiate the concatenated code using Reed Solomon codes. 

\begin{code}[Concatenated Code $\CRS(\epsilon, n, k, N, \rho)$]\label{code:concat-RS}
Let $Q = 2^k$ be the size of the alphabet of the outer code.  
Let $\epsilon, \rho \in (0,1)$, and $n, k, N$, to be integers satisfying $k < \epsilon^2n^2 - 2n$, and $N \leq Q$. 
Then, 
    $$
    \CRS =   \mathrm{Opt}(\epsilon, n, k) \circ \STCZD(\mathrm{RS}(N, \rho, Q)).
    $$
\end{code}

The following theorem follows directly from \Cref{lem:symmetric-tensor-properties,lem:concat}. 
As a reminder, here we are considering the rate of the codes as a (undirected) graph code.

\begin{theorem}\label{thm:concat-RS properties}
Let $\epsilon, n, k, N, \rho$ be parameters satisfying the requirements listed in \Cref{code:concat-RS},
then $\CRS(\epsilon, n, k, N, \rho)$ is a graph code with rate $\epsilon^2\rho^2 - o(1)$, and relative distance $(1-\epsilon)(1-\rho)$. 
\end{theorem}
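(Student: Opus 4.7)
The plan is to assemble the ingredients already established: Lemma~\ref{lem:symmetric-tensor-properties} for the outer code, Lemma~\ref{lem:inner} for the inner code, Lemma~\ref{lem:concat} for the parameters after symmetric concatenation, and the inequality $\dgraph \geq \dab$ for undirected graphs to translate the directed distance bound into an undirected one.

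First I would analyze the outer code $C_{\mathrm{out}} = \STCZD(\mathrm{RS}(N,\rho,Q))$. The Reed--Solomon code has parameters $[N, \rho N, (1-\rho)N]_Q$, so by Lemma~\ref{lem:symmetric-tensor-properties} its $\STCZD$ is a linear code of matrices in $\F_Q^{N \times N}$ whose elements are symmetric and zero-diagonal, with directed graph distance $D = (1-\rho)N$ and dimension at least $K \geq \binom{\rho N + 1}{2} - N$. Second, the inner code $C_{\mathrm{in}} = \mathrm{Opt}(\epsilon, n, k)$ supplied by Lemma~\ref{lem:inner} is a linear directed graph code over $\F_2$ of dimension $k$ and distance $d \geq (1-\epsilon)n$, with $|C_{\mathrm{in}}| = 2^k = Q$ as required to encode outer alphabet symbols.

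Third, I would apply Lemma~\ref{lem:concat} to $\CRS = C_{\mathrm{in}} \circ C_{\mathrm{out}}$: the resulting code is linear, lives in $\F_2^{nN \times nN}$, has directed graph distance at least $dD \geq (1-\epsilon)(1-\rho) \cdot nN$, and dimension at least $kK$. Because $C_{\mathrm{out}}$ is symmetric and zero-diagonal, the remark following Definition~\ref{def:concat} ensures that every codeword of $\CRS$ is also symmetric and zero-diagonal, i.e.\ the adjacency matrix of an undirected loopless graph on $nN$ vertices. Using the inequality $\dgraph(G,H) \geq \dab(G,H)$ noted in the text, the relative undirected graph distance is at least $(1-\epsilon)(1-\rho)$, as claimed.

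For the rate, I would simply compute: taking $k$ as large as allowed, namely $k = \lfloor \epsilon^2 n^2 - 2n \rfloor$, the dimension is at least
\[
kK \;\geq\; \bigl(\epsilon^2 n^2 - 2n\bigr)\left(\binom{\rho N + 1}{2} - N\right) \;=\; \tfrac{1}{2}\,\epsilon^2 \rho^2\, n^2 N^2\,\bigl(1 - o(1)\bigr),
\]
while the block length is $\binom{nN}{2} = \tfrac{1}{2}(nN)^2(1 - o(1))$, so the rate is $\epsilon^2\rho^2 - o(1)$ as $n, N \to \infty$. There is no real obstacle here; the only thing to be careful about is that the directed-to-undirected distance conversion goes in the correct direction (it does, since we have a lower bound on $\dab$ and $\dgraph \geq \dab$) and that the symmetric-zero-diagonal structure is maintained by symmetric concatenation, which is exactly why Definition~\ref{def:concat} was formulated with the below-diagonal transpose.
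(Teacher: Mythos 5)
Your proof is correct and takes the same route the paper indicates: the paper simply states that the theorem ``follows directly from Lemma~\ref{lem:symmetric-tensor-properties} and Lemma~\ref{lem:concat},'' and your write-up fills in exactly those steps --- parameters of the outer $\STCZD(\mathrm{RS})$ code, parameters of the inner $\mathrm{Opt}$ code, multiplicativity of dimension and directed distance under symmetric concatenation, preservation of the symmetric/zero-diagonal structure, and the inequality $\dgraph \geq \dab$ to pass from directed to undirected graph distance, with the rate arithmetic carried out correctly.
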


Note that using this construction, we can get asymptotically good codes for any constant rate and distance - including for distances $> 1/2$, which was not obtained in any of the previous constructions. 
We get $ R = (1 - \sqrt{\delta})^4 - o(1)$ by setting $\epsilon = \rho$.

One drawback of this construction is that it is not strongly explicit or even explicit. 
The outer code can be made strongly explicit using \Cref{thm:stczd-rs-se}; however, the inner code was an optimal code, which we obtained by a randomized construction. 
The complexity of searching for such a code by brute force is too large. 
In particular, the optimal code has dimension $\epsilon^2n^2$ and block length $n^2$. 
Since we need the size of the code to be equal to the size of the outer alphabet, we have $N = 2^{\epsilon ^2n^2}$, so $n = \sqrt{\log(N)}/\epsilon$. 
Then, there are at least $2^{\epsilon^2n^4} = 2^{\log(N)^2 / \epsilon^2}$ generator matrices to search over. 
Thus, we cannot find such a code efficiently. 

To address this, we reduce the search space by concatenating \emph{multiple times}. 
The resulting code will have a slightly worse distance/rate tradeoff but will still be asymptotically good for any constant distance or rate. 

We note that $\CRS$ can also be made strongly explicit using a Justensen-like construction. 
However, although this code is again asymptotically good, it has distance bounded away from 1. 
We present this construction in the Appendix. 

\subsection{Multiple concatenation}

While concatenating twice suffices to obtain an explicit code, it is not clear that the obtained code is strongly explicit. 
This may be addressed by concatenating three times at the cost of slightly weaker parameters. 
Here, we will also use the tensor product code as a building block. 
For any linear code $C \subset \F_q^n$, let $\mathrm{TC}(C)$ be the tensor product code of $C$. 
As a reminder, $\mathrm{TC}(C)$ is the code consisting of matrices $A \in \F_q^{n \times n}$ such that each row and each column of $A$ are elements of $C$. 

\begin{remark}
Suppose $C$ is a linear code with distance $d$ and rate $R$.  
Then, it follows from the proof of \Cref{lem:symmetric-tensor-properties} that $\mathrm{TC}(C)$ has directed graph distance at least $d$. 
It is also well known that $\mathrm{TC}(C)$ has rate $R^2$.
\end{remark}

Below, we present the analysis for triple concatenation.
\newcommand{\CTrip}{\mathrm{C_{Trip}}}
\begin{code}[Triple Concatenation $\CTrip(\rho, N)$]\label{code:triple-concatenation}
    For $\rho \in (0,1)$ and an integer $N$, let $C$ be the subcode of $\STCZD(\mathrm{RS}(N, \rho, N))$ in \Cref{thm:stczd-rs-se}. Then 
    $$\CTrip = \mathrm{Opt}(N_3, \rho) \circ \mathrm{TC}(\mathrm{RS}(N_2, \rho, N_2)) \circ \mathrm{TC}(\mathrm{RS}(N_1, \rho, N_1)) \circ C,$$
    where $N_1, N_2$ and $N_3$ are picked to make the concatenation work, i.e., $|\mathrm{Opt}(N_3, \rho)| \geq N_2$, and so on.
\end{code}

Notice that only the outer-most code needs to be symmetric and have zero diagonal since we use the symmetric concatenation operation (entries below the diagonal will be transposed). 
Thus, using the Tensor Product Code for the two middle codes (instead of STCZD) is sufficient and saves a factor of 2 (each time) on the rate.

\begin{theorem}\label{thm:triple-concatenation}
    Let $N$ be a positive integer and $\rho \in (0,1)$. Then $\CTrip(\rho, N)$ has distance at least $(1 - \rho)^4$, and rate $\rho^8$. Furthermore, $\CTrip(\rho, N)$ is strongly explicit.
\end{theorem}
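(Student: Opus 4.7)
The plan is to iterate Lemma~\ref{lem:concat} along the four-level cascade
$$\mathrm{Opt}(N_3,\rho) \circ \mathrm{TC}(\mathrm{RS}(N_2,\rho,N_2)) \circ \mathrm{TC}(\mathrm{RS}(N_1,\rho,N_1)) \circ \STCZD(\mathrm{RS}(N,\rho,N))$$
and then verify strong explicitness of the four ingredient codes one at a time. The first step is to choose parameters $N_1, N_2, N_3$ satisfying the three alphabet-matching constraints of Definition~\ref{def:concat}, namely $|\mathrm{TC}(\mathrm{RS}(N_1,\rho,N_1))| = N$, $|\mathrm{TC}(\mathrm{RS}(N_2,\rho,N_2))| = N_1$, and $|\mathrm{Opt}(N_3,\rho)| = N_2$, while simultaneously arranging that $N_3 = O(\sqrt{\log\log n})$, where $n = NN_1N_2N_3$ is the final block length. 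These can be solved from the inside out: set $N_3 \approx \sqrt{\log\log n}/\rho$, then $N_2 \approx 2^{\rho^2 N_3^2}$, then $N_1$ by inverting $N_1^{\rho^2 N_1^2} \approx N$, with $N$ carrying essentially all of $n$.

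For the distance and rate, each of the four ingredient codes has directed graph distance $(1-\rho)$ times its own block length: the outermost by Lemma~\ref{lem:symmetric-tensor-properties} (the subcode from Lemma~\ref{thm:stczd-rs-se} inherits the $\STCZD$ distance), the two middle tensor codes by the remark noting that the proof of Lemma~\ref{lem:symmetric-tensor-properties} carries through to $\mathrm{TC}$, and $\mathrm{Opt}(N_3,\rho)$ by Lemma~\ref{lem:inner}. Three applications of Lemma~\ref{lem:concat} multiply these to give directed graph distance at least $(1-\rho)^4 n$ and $\F_2$-dimension approximately $\rho^8 n^2 / 2$. Because the outermost code is symmetric with zero diagonal and the symmetric-concatenation rule of Definition~\ref{def:concat} transposes inner blocks below the diagonal, every codeword of $\CTrip$ is the adjacency matrix of a simple undirected graph; the inequality $\dgraph \geq \dab$ then converts the directed bound into relative undirected graph distance $(1-\rho)^4$, and dividing the dimension by $\binom{n}{2}$ gives rate $\rho^8 - o(1)$.

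For strong explicitness, I would argue layer by layer. The outermost layer is strongly explicit by Lemma~\ref{thm:stczd-rs-se}, with each entry being one evaluation of a polynomial of the form $(X-Y)^2 h(X,Y)$, costing $\poly(\log N)$. The two middle tensor Reed-Solomon codes are strongly explicit because their codewords take the form $G^\top X G$ with $G$ the Vandermonde generator of an RS code, and any single entry is one bivariate polynomial evaluation. The innermost code $\mathrm{Opt}(N_3,\rho)$ is located by brute-force enumeration over linear subspaces of $\F_2^{N_3 \times N_3}$; with $N_3 = O(\sqrt{\log\log n})$ this enumeration finishes in $\poly(\log n)$ time and produces a generator of $\poly(\log n)$ bits, so any single entry can be read off (or recomputed on the fly) in $\poly(\log n)$. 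Given a query $(i,j,e)$, the corresponding entry of the $e$-th basis element of $\CTrip$ is then computed by four nested block lookups, each invoking one of these strongly explicit sub-procedures, for a total cost of $\poly(\log n)$.

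The step I expect to be the main obstacle is the parameter coordination from the first paragraph: verifying that the three cascaded alphabet-size equations admit an integer solution with $N_3$ doubly-logarithmic in $n$ (so that brute-force search for $\mathrm{Opt}$ fits in $\poly(\log n)$), and that the additive slack in the dimensions (the $-2 N_3$ in Lemma~\ref{lem:inner} and the analogous lower-order terms in the outer code) is absorbed into the $o(1)$ loss in the final rate. Once the schedule is fixed, the rest of the proof is a mechanical composition of Lemmas~\ref{lem:symmetric-tensor-properties}, \ref{lem:inner}, \ref{lem:concat}, and \ref{thm:stczd-rs-se}.
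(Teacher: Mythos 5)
Your overall plan — iterate Lemma~\ref{lem:concat} across the four levels and argue strong explicitness layer by layer — matches the paper exactly, and your distance and rate accounting is fine. The one concrete error is in the parameter schedule, and it is precisely the calculation that the whole ``triple'' construction is designed to win.

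You set $N_3 \approx \sqrt{\log\log n}/\rho$ and then claim the brute-force search for $\mathrm{Opt}(N_3,\rho)$ runs in $\poly(\log n)$ time. It does not: a generator matrix of $\mathrm{Opt}(N_3,\rho)$ has about $\rho^2 N_3^4$ bits, so the number of candidates to enumerate is $2^{\rho^2 N_3^4} = 2^{\Theta((\log\log n)^2)}$, which is super-polylogarithmic (the exponent must be $O(\log\log n)$ for $\poly(\log n)$, and $(\log\log n)^2 \gg \log\log n$). You appear to have conflated the \emph{size} of the generator matrix (which is indeed $o(\log n)$ bits) with the \emph{time} to find it by exhaustive search. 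The correct schedule is determined from the outside in by the alphabet-matching constraints: $N_1^{\rho^2 N_1^2} \gtrsim N$ gives $N_1 = O(\log N)$; then $N_2^{\rho^2 N_2^2} \gtrsim N_1$ gives $N_2 = O(\log\log N)$; and finally $2^{\rho^2 N_3^2} \gtrsim N_2$ gives $N_3 = O(\sqrt{\log\log\log N})$ — one more iterated logarithm than you have. With that value the search space is $2^{O((\log\log\log n)^2)} = O(\log n)$, which is $\poly(\log n)$. Your ``inside-out'' ordering is what led you astray: picking $N_3$ as large as $\sqrt{\log\log n}$ satisfies all the alphabet-matching inequalities, but it needlessly inflates the innermost block length and destroys strong explicitness; and indeed with your parameters $N_2 \approx 2^{\rho^2 N_3^2} = \log n$, which is far larger than necessary (and even larger than $N_1$). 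This discrepancy is not cosmetic — if $N_3$ could safely be as large as $\sqrt{\log\log n}$, then double concatenation (which forces $N_3$ of order $\sqrt{\log\log n}$) would already be strongly explicit, contradicting the paper's observation that double concatenation is only explicit and that a third level is needed to shrink the search to polylogarithmic.

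One smaller point: the statement asserts rate exactly ``$\rho^8$'' and the paper's proof invokes Lemma~\ref{lem:concat} without tracking the additive losses (the $-2N_3$ in $\mathrm{Opt}$, the $-n$ in $\STCZD$, etc.), so your remark that these must be absorbed into an $o(1)$ term is a reasonable extra care that the paper glosses over. That part is fine; the gap that needs fixing is the value of $N_3$.
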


\begin{proof}
    The claims about rate and distance follow directly from \Cref{lem:concat}.

    We'll now show that this code is strongly explicit. 
    The outermost code $C$ is strongly explicit, and the two codes in the middle built on Reed Solomon codes are also strongly explicit. 
    The idea is that the concatenation steps in the middle allow us to shrink the alphabet size from $N$ to (less than) $\log(\log(N))$. 
    Searching for optimal codes of this size can be done easily by brute force.

    The dimension of $\mathrm{TC}(\mathrm{RS}(N_1, \rho, N_1))$ is $(\rho N_1)^2$, so the number of codewords is $N_1^{(\rho_1N_1)^2}$, and for the concatenation to work, we need this to be at least $N$. 
    That is, we need $(\rho_1N_1)^2\log(N_1) \geq \log(N)$, which we can get easily by setting $N_1 = O(\log(N))$. 
    For the same reason, we can take $N_2=O(\log\log(N))$.

%
%
%

    This is now small enough to do a brute-force search for an optimal code. 
    The inner-most code has dimension $\rho^2N_3^2$, so we need $2^{\rho^2N_3^2} = N_2$, or $N_3 = O(\sqrt{\log(N_2)})$. 
    There are then $\rho^2N_3^4$ possible generator matrices to search over. 
    So the total number of codes we will need to search over is at most $2^{\rho^2n^4} = 2^{O(\log\log\log(N)^2)} = 2^{o(\log\log(N))} = O(\log(N))$.
\end{proof}

The tradeoff for this code is then 
$$R = (1 - \delta^{1/4})^8.$$

Thus, we get \emph{strongly explicit} asymptotically good codes for any constant distance or rate.

Concatenating twice would suffice if we just wanted explicit codes (instead of strongly explicit). 
In particular, the search space for the inner-most code has size 
$$
2^{O(\log\log(N)^2)} = 2^{o(\log(N))},
$$
which is smaller than any polynomial in $N$ but not polylogarithmic. 
The corresponding tradeoff for the double concatenated code is $R = (1 - \delta^{1/3})^6$.

\section{Explicit graph codes with very high distance: dual-BCH Codes}\label{sec:weil}

In this section, we give explicit constructions of graph codes
for the setting of very high distance ($\delta = 1 - o(1)$). 
As noted earlier, when the complete graph and the empty graph are part of the code, this is a generalization of the problem of constructing explicit Ramsey graphs (i.e., graphs with no large clique or independent set), which corresponds to graph codes of size at least $3$.

Our main result here is an explicit construction of a graph code with distance $1- \frac{n^{\epsilon}}{n^{1/2}}$ and dimension $\Omega(n^{\epsilon}\log n)$, for all $\epsilon \in [0, 1/2)$.

\begin{theorem}
For all $d$, there is a strongly explicit construction construction of a code with 
dimension $\Omega(d \log n)$ and distance $n - O(d \sqrt{n})$.
\end{theorem}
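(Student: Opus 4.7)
The plan is to instantiate a dual-BCH analog in the graph setting, following the sketch in the introduction. Assume $n = 2^t$ and identify the vertex set $[n]$ with $\F_{2^t}$. Fix $d$ odd integers $3 \leq e_1 < \cdots < e_d$ lying in distinct cyclotomic cosets of $\mathbb{Z}/(2^t-1)$ under multiplication by $2$. For each polynomial $f(z) = \sum_{i=1}^d c_i z^{e_i}$ with coefficients $c_i \in \F_{2^t}$, define the graph $G_f$ on vertex set $\F_{2^t}$ by $A^f_{x,y} = \Tr(f(x+y))$. Symmetry holds since $x+y = y+x$, the diagonal vanishes because $f(0) = 0$, and $f \mapsto A^f$ is $\F_2$-linear, so the $G_f$ form a linear subspace of graphs. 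Each entry is computable in $\poly(\log n)$ time, yielding strong explicitness. For the dimension, the cyclotomic coset condition guarantees that the map $(c_1, \ldots, c_d) \mapsto \Tr(\sum_i c_i z^{e_i})$, viewed as an $\F_2$-valued function on $\F_{2^t}$, is injective; this is the standard dual-BCH dimension count, which uses $\Tr(c z^{2e}) = \Tr(c^{1/2} z^e)$ to conclude that only one exponent per coset contributes independent trace functions. Each $c_i$ then supplies $t = \log n$ independent bits, giving total $\F_2$-dimension $\Omega(d \log n)$.

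For the distance, let $G_f, G_g$ be distinct codewords, so their symmetric difference is $G_h$ with $h = f + g \neq 0$, and by injectivity the function $\chi(z) := (-1)^{\Tr(h(z))}$ is not identically $+1$. Since $\dgraph(G_f, G_g) = n - \alpha(G_h)$, where $\alpha$ is the independence number, it suffices to bound $\alpha(G_h) = O(d \sqrt{n})$. For an independent set $S \subseteq \F_{2^t}$ we have $\chi(x+y) = +1$ for all $x, y \in S$ (including $x = y$, since $h(0) = 0$), so
\[ |S|^2 = \sum_{x, y \in S} \chi(x+y) = \sum_{\lambda \in \F_{2^t}} \widehat{\chi}(\lambda)\, T_S(\lambda)^2, \]
where $T_S(\lambda) = \sum_{x \in S}(-1)^{\Tr(\lambda x)}$ and $\widehat{\chi}(\lambda) = \frac{1}{n} \sum_z \chi(z) (-1)^{\Tr(\lambda z)}$. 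The Weil bound applied to the polynomial $h(z) + \lambda z$, which has odd degree $D = \max_i e_i = O(d)$ (and hence is coprime to the characteristic $2$), yields $|\widehat{\chi}(\lambda)| \leq (D-1)/\sqrt{n}$ uniformly in $\lambda$. Combined with Parseval $\sum_\lambda T_S(\lambda)^2 = n|S|$, this gives $|S|^2 \leq (D-1)\sqrt{n}\cdot|S|$, so $|S| = O(d \sqrt{n})$ as desired.

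The main obstacle is ensuring the Weil bound applies uniformly in $\lambda$, including at $\lambda = 0$; this forces $h(z) + \lambda z$ to have degree coprime to the characteristic. For this reason the exponents must be odd and at least $3$: taking $e = 1$ (or any pure power of $2$) would produce an $\F_2$-linear polynomial, and the resulting graph would be bipartite with an independent set of size $n/2$, breaking the bound. The only other potential pitfall is the degenerate case $\Tr \circ h \equiv 0$ with $h \neq 0$ (which would make $\chi \equiv +1$ and invalidate the argument), and this is exactly what the cyclotomic coset injectivity rules out. With these mild conditions in place, the rest is a routine Fourier calculation on the additive group of $\F_{2^t}$.
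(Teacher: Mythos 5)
Your proof is correct and takes a genuinely different, and arguably cleaner, route than the paper. The paper's proof starts from $|S|^2 = \sum_{x,y\in S}(-1)^{\Tr(f(x+y))}$, applies Cauchy--Schwarz over $x$, and extends to a complete sum over $x\in\F_{2^t}$; this reduces the problem to bounding character sums of the difference polynomials $P_{y_1,y_2}(X) = f(X+y_1)-f(X+y_2)$. Those polynomials typically have \emph{even} degree (the leading odd-degree term of $f$ cancels), so the paper must first replace $P_{y_1,y_2}$ by the odd-degree $\tildeP_{y_1,y_2}$ using $\Tr(a^2)=\Tr(a)$, and then run a Schwartz--Zippel argument (Lemma~\ref{lem:fewys}) to control the $O(d|S|)$ degenerate pairs for which $\tildeP_{y_1,y_2}$ is constant. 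Your proof sidesteps all of that: you Fourier-expand $\chi(z)=(-1)^{\Tr(h(z))}$ directly over the additive characters of $\F_{2^t}$, so the complete sums you need are $\sum_z(-1)^{\Tr(h(z)+\lambda z)}$. Since $h\in W_d$ has only odd monomials of degree at least $3$, the polynomial $h(z)+\lambda z$ always has odd degree coprime to $2$, for every $\lambda$ including $\lambda=0$, so the Weil bound applies uniformly with no degenerate cases to discard. Combined with Parseval this gives $|S|\le(D-1)\sqrt{n}$ in one step. The two approaches are morally related (Cauchy--Schwarz is a second-moment proxy for Fourier orthogonality), but yours avoids both the odd-degree reduction $P\mapsto\tildeP$ and the Schwartz--Zippel count, and it also handles the $\Tr\circ h\equiv 0$ degeneracy automatically (the Weil bound at $\lambda=0$ rules it out, since a degree-$\ge 3$ odd polynomial cannot be of the form $g^2+g$). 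One small thing worth flagging: the dimension count via distinct cyclotomic cosets is the right condition and is what makes $f\mapsto M_f$ injective; the paper's choice of all odd exponents in $[3,d]$ with $d\le\sqrt{n}$ implicitly relies on the same standard dual-BCH fact, so you are not adding a new hypothesis, just making the standard one explicit.
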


In analogy with the situation for Hamming-distance codes, these are the dual-BCH codes of the graph-distance world.

\subsection{Warmup: a graph code with dimension \texorpdfstring{$\Omega(\log n)$}{Omega(log(n))}}

As a warmup, we first construct code with distance $1 - \frac{1}{n^\epsilon}$ with growing dimension. 

Let $n = 2^t$.
Let $\Tr : \F_{2^t} \to \F_{2}$ denote the finite field trace map.
Concretely, it is given by:
$$\Tr(x) = x + x^2 + x^4 + \ldots + x^{2^i} + \ldots + x^{2^{t-1}}$$
For each $\alpha \in \F_{2^t}$,
consider the matrix $M_\alpha \in \F_2^{n \times n}$, where the rows and columns of $M_\alpha$ are indexed by elements of $\F_{2^t}$, given by:
$$ \left(M_{\alpha}\right)_{x,y} = \Tr(\alpha \cdot (x+y)^3).$$

Note that each $M_\alpha$ is symmetric. Consider the code
\begin{code}\label{code:highdistwarmup}
For $n$ of the form $2^t$, let us define the family of codes
\[
C_{\mathrm{Warmup}} = \{ M_\alpha \mid \alpha \in \F_{2^t} \}.
\]
\end{code}
We have that $C_{\mathrm{Warmup}}$ is a linear code of dimension $t = \log_2 n$.

\begin{theorem}\label{thm:DualBCH}
The distance of $C_{\mathrm{Warmup}}$ is at least $1 - O(n^{-1/2})$.

\end{theorem}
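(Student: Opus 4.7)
The plan is to use linearity to reduce to bounding the independence number of each nonzero codeword $M_\alpha$ viewed as a Cayley graph on $(\F_{2^t}, +)$, and then to execute a Fourier-analytic argument on $\F_{2^t}$ using Weil's bound on cubic exponential sums.

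First I would observe that $C_{\mathrm{Warmup}}$ is $\F_2$-linear: since $\Tr$ is $\F_2$-linear, $M_\alpha + M_\beta = M_{\alpha+\beta}$, so it suffices to prove $\dgraph(M_\alpha, 0) \geq n - O(\sqrt{n})$ for every $\alpha \neq 0$. By the characterization recalled in the introduction, $\dgraph(M_\alpha, 0) = n - \nu(G_\alpha)$, where $G_\alpha$ is the Cayley graph on $\F_{2^t}$ with connection set $\{s : \Tr(\alpha s^3) = 1\}$ and $\nu(G_\alpha)$ denotes its independence number. So the task reduces to showing $\nu(G_\alpha) \leq 2\sqrt{n}$ whenever $\alpha \neq 0$.

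To bound $\nu(G_\alpha)$, let $I \subseteq \F_{2^t}$ be independent in $G_\alpha$ and set $f = \mathbf{1}_I$. Since $\Tr(\alpha(x+y)^3) = 0$ for every pair $x, y \in I$ (the case $x = y$ uses $\Tr(0) = 0$), I obtain
$$|I|^2 = \sum_{x, y \in I} (-1)^{\Tr(\alpha(x+y)^3)} = \sum_{z \in \F_{2^t}} \chi(z) (f \ast f)(z),$$
where $\chi(z) = (-1)^{\Tr(\alpha z^3)}$ and $\ast$ denotes convolution on $(\F_{2^t}, +)$. Applying Plancherel with respect to the additive characters $\psi_a(x) = (-1)^{\Tr(ax)}$ rewrites the right-hand side as $\tfrac{1}{n} \sum_{a \in \F_{2^t}} \hat{\chi}(a) \hat{f}(a)^2$.

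The key input is Weil's bound on $\hat{\chi}(a) = \sum_z (-1)^{\Tr(\alpha z^3 + az)}$: because $\alpha z^3 + az$ has degree exactly $3$ for $\alpha \neq 0$ and $3$ is coprime to the characteristic $2$, Weil's theorem gives $|\hat{\chi}(a)| \leq 2\sqrt{n}$ uniformly in $a$. Combined with Parseval's identity $\sum_a \hat{f}(a)^2 = n|I|$, this yields $|I|^2 \leq 2\sqrt{n} \cdot |I|$, i.e.\ $\nu(G_\alpha) \leq 2\sqrt{n}$, and hence $\dgraph(M_\alpha, 0) \geq n - 2\sqrt{n}$, giving the claimed relative distance $1 - O(n^{-1/2})$. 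The only place where caution is needed is the verification that Weil's hypotheses hold (the cubic term must be genuinely nonzero for every $\alpha \neq 0$, which is immediate); the Fourier computation itself is entirely routine.
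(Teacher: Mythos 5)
Your proof is correct, but it takes a genuinely different route from the paper's. You recast the bound on the independence number of the Cayley graph $G_\alpha$ as a Fourier-analytic statement: write $|I|^2 = \frac{1}{n}\sum_a \hat\chi(a)\hat f(a)^2$ via Plancherel, bound $\hat\chi(a) = \sum_z (-1)^{\Tr(\alpha z^3 + az)}$ uniformly by the Weil bound (which applies since $\alpha z^3 + az$ has odd degree $3$ when $\alpha \neq 0$, so it cannot be Artin--Schreier degenerate), and close with Parseval. The paper instead applies Cauchy--Schwarz directly to $\sum_{x,y \in S}(-1)^{\Tr(\alpha(x+y)^3)}$, which has the effect of lowering the degree: the resulting inner sums involve the quadratic polynomials $P_{y_1,y_2}(X) = \alpha((y_1+y_2)X^2 + (y_1+y_2)^2 X + \cdots)$, and in characteristic $2$ the function $x \mapsto \Tr(aX^2 + bX + c)$ is $\F_2$-affine, hence the inner sum is exactly $0$ unless $a = b^2$; Schwartz--Zippel then shows this exceptional locus contains $O(|S|)$ pairs $(y_1, y_2)$. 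This makes the paper's warmup proof entirely elementary --- it never invokes Weil, which is introduced only in the generalization to higher-degree $f$. Your spectral argument is shorter and has the advantage that it extends to the general construction $W_d$ with no extra work (replace $\alpha z^3$ by $f(z)$ and use $\deg \leq d$ in Weil), whereas the paper's Cauchy--Schwarz route requires a more delicate analysis of $\tildeP_{y_1, y_2}$ in the general case. Amusingly, the paper's introduction summarizes its technique as ``Weil bounds on character sums and a Fourier analytic approach,'' which describes your proof more accurately than the warmup proof actually given in the text.
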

\begin{proof}
Fix any $\alpha \in \mathbb{F}_{2^t}$. Let $S \subseteq \F_{2^t}$ be an arbitrary subset of vertices.
It suffices to show that if $S$ is bigger than $\Omega(n^{1/2}) = \Omega(2^{t/2})$, then there exist some $x,y \in S$ such that 
$$\Tr( \alpha \cdot (x+y)^3 ) = 1.$$

Suppose not. Then we have:
$$ \sum_{x, y \in S} (-1)^{\Tr( \alpha (x+y)^3 )} = |S|^2.$$
By Cauchy-Schwarz, we get:
\begin{align*}
 |S|^4 &= \left(\sum_{x \in S} \sum_{y \in S} (-1)^{\Tr(\alpha (x+y)^3)}\right)^2 \\
 &\leq \left(\sum_{x \in S} \left(\sum_{y \in S} (-1)^{\Tr(\alpha (x+y)^3)}\right)^2 \right)   \cdot |S|\\
 &\leq \left(\sum_{x \in \F_{2^t}} \left(\sum_{y \in S} (-1)^{\Tr(\alpha (x+y)^3)}\right)^2 \right)   \cdot |S|\\
 &= \left(\sum_{x\in \F_{2^t}} \sum_{y_1, y_2 \in S} (-1)^{\Tr(\alpha ((x+y_1)^3 + (x+y_2)^3))}   \right) \cdot |S|\\
 &\leq \left(\sum_{y_1, y_2}  \left| \sum_{x \in \F_{2^t}} (-1)^{\Tr(\alpha ((x+y_1)^3 + (x+y_2)^3))} \right|  \right) \cdot |S|.
\end{align*}

For $y_1, y_2 \in \F_{2^t}$, let $P_{y_1, y_2}(X)$ be the polynomial
\begin{align*}
P_{y_1, y_2} (X ) &=  \alpha \cdot \left( (X+y_1)^3 + (X+ y_2)^3 \right)\\
&=\alpha \cdot \left( (y_1 + y_2) X^2 + (y_1^2 + y_2^2)X + (y_1^3 + y_2^3)\right).
\end{align*}

The key observation is that for most $(y_1, y_2) \in S^2$, the trace of the polynomial $P_{y_1,y_2}(X)$ is a nonconstant $\F_2$-linear function, and thus the inner sum:
$$ \sum_{x \in \F_{2^t}} (-1)^{\Tr(P_{y_1,y_2}(x))}$$
equals $0$.

\begin{lemma}
 If $P(X) = a X^2 + bX + c \in \F_{2^t}[X]$, then 
 $$ \Tr \circ P : \F_{2^t} \to \F_2$$
 is a nonconstant $\F_2$-linear function unless 
 $a = b^2$.
\end{lemma}
The proof is standard, and we omit it.

By the lemma, we get that there are at most $4|S|$ choices of $(y_1, y_2)$ such that the inner sum is non-zero (namely those $(y_1, y_2) \in S^2$ for which $\alpha (y_1 + y_2) = \left(\alpha (y_1^2 + y_2^2) \right)^2$, which are few in number by the Schwartz-Zippel lemma).

Thus we get:
$$ |S|^4 \leq 4|S|^2 \cdot 2^t,$$
from which we get  $|S| \leq  O(2^{t/2})$, as desired.
\end{proof}

\subsection{Larger dimension}

We now see how to get graph codes of distance $1 - \frac{1}{n^\epsilon}$ with $\epsilon < \frac{1}{2}$ and larger rate.

For a polynomial $f(X) \in \F_{2^t}[X]$, let $M_f$ be $n \times n$ matrix with rows and columns indexed by $\F_{2^t}$ for which:
$$ \left( M_f\right)_{x,y} = \Tr( f(x+y) ).$$

Let $W_d$ be the $\F_{2^t}$-linear space of all polynomials $f(X)$ of the form:
$$ f(X) = \sum_{3 \leq 2i+1 \leq  d} \alpha_i X^{2i+1},$$
where the $\alpha_i \in \F_{2^t}$.

Then, let us define our construction.
\begin{code}\label{code:dualBCH}
For $n$ of the form $2^t$ and $d \leq n^{1/2}$, let us define the family of codes
\[
C_{\mathrm{DualBCH}}(n,d) = \{ M_f : f \in W_d \}.
\]
\end{code}

\begin{theorem}
We have that $C_{\mathrm{DualBCH}}(n,d)$ is a linear graph code of distance $1 - O\left(dn^{-1/2}\right)$ and dimension $\Omega(dt) = \Omega(d\log n)$.

\end{theorem}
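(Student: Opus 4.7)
The plan is to verify the two claims---dimension $\Omega(d \log n)$ and distance $n - O(d\sqrt n)$---separately, the former by a direct calculation and the latter by generalizing the Weil-bound argument of \Cref{thm:DualBCH}.

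\textbf{Dimension.} The space $W_d$ has $\F_{2^t}$-dimension $\lfloor (d-1)/2\rfloor$ and hence $\F_2$-dimension $\Theta(d\log n)$. I would show that $f \mapsto M_f$ has small $\F_2$-kernel. If $M_f = \mathbf{0}$ then $\Tr(f(z)) = 0$ for all $z \in \F_{2^t}$ (taking $z = x+y$). Expanding $\Tr(f(z)) = \sum_{i,j} \alpha_i^{2^j} z^{(2i+1) 2^j}$ and reducing exponents modulo $2^t - 1$, the hypothesis $d \leq n^{1/2} < 2^{t/2}$ forces (short number-theoretic check: if $m,m' < 2^{t/2}$ are odd and $m \equiv 2^j m' \pmod{2^t-1}$ then $m = m'$) that the distinct exponents $2i+1$ lie in disjoint $2$-cyclotomic cosets of sizes $s_i > t/2$. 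So the coefficient conditions decouple per $i$ and collapse to $\Tr_{\F_{2^t}/\F_{2^{s_i}}}(\alpha_i) = 0$ for each $i$. Summing surviving dimensions yields $\sum_i s_i = \Omega(d t) = \Omega(d \log n)$.

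\textbf{Distance: setup.} By linearity, the distance equals $\min_{f\neq 0}\dgraph(M_f, \mathbf{0}) = n - \max_{f\neq 0}\alpha(G_f)$, where $G_f$ is the graph with adjacency matrix $M_f$; so it suffices to show $\alpha(G_f) = O(d\sqrt n)$ for every non-zero $f \in W_d$. Let $S$ be independent, so $\Tr(f(x+y)) = 0$ for all $x,y \in S$. Following the warm-up, Cauchy--Schwarz gives
\begin{equation*}
|S|^4 \;\leq\; |S|\cdot \sum_{y_1,y_2\in S}\left|\sum_{x \in \F_{2^t}} (-1)^{\Tr(f(x+y_1) + f(x+y_2))}\right|.
\end{equation*}
Substituting $u = x + y_1$ and $\Delta = y_1 + y_2$ makes each inner sum depend only on $\Delta$; call it $T(\Delta) := \sum_u (-1)^{\Tr(h_\Delta(u))}$, where $h_\Delta(u) := f(u) + f(u + \Delta)$.

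\textbf{Weil bounds and bad-$\Delta$ count.} Call $\Delta$ \emph{good} if $u \mapsto \Tr(h_\Delta(u))$ is non-constant on $\F_{2^t}$, and \emph{bad} otherwise. For good $\Delta$, the Weil bound for additive character sums gives $|T(\Delta)| \leq (d-1)\sqrt n$: indeed $h_\Delta$ has degree $\leq d-1$ in $u$ (the $u^{2i+1}$ terms cancel in characteristic $2$), and after the standard trace reduction $\Tr(\alpha u^{2k}) = \Tr(\alpha^{1/2} u^k)$ its effective polynomial $\tilde h_\Delta$ has odd degree $\leq d-1$. For bad $\Delta$ we only use $|T(\Delta)| \leq n$. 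The crux---and main obstacle---is to bound the number of bad $\Delta$ polynomially in $d$. Let $i^\star$ be maximal with $\alpha_{i^\star} \neq 0$, write $2i^\star = 2^v k$ with $k$ odd, and track all contributions to the coefficient of $u^k$ in $\tilde h_\Delta$. Each contribution comes from the $u^{k\cdot 2^l}$ term of $h_\Delta$ for some $0 \leq l \leq v$, equals $c_{k 2^l}(\Delta)^{1/2^l}$, and (since $c_{k2^l}(\Delta)$ has no constant term in $\Delta$) becomes a polynomial in $\tau := \Delta^{1/2^v}$ of $\tau$-degree $\leq (d - k 2^l) \cdot 2^{v-l}$, with lowest $\tau$-degree $\geq 2^{v-l}$. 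For $l < v$ the lowest $\tau$-degree is at least $2$, while the $l = v$ contribution equals $(\alpha_{i^\star} \Delta)^{1/2^v} = \alpha_{i^\star}^{1/2^v}\tau$. Thus the total coefficient of $u^k$ in $\tilde h_\Delta$ is a non-zero polynomial in $\tau$ of degree $O(d^2)$ whose $\tau^1$-coefficient is $\alpha_{i^\star}^{1/2^v} \neq 0$, so it has at most $O(d^2)$ roots---and hence at most $O(d^2)$ bad $\Delta$.

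\textbf{Finishing.} Each bad $\Delta$ contributes at most $|S|$ pairs $(y_1, y_2) \in S^2$ with $y_1 + y_2 = \Delta$. Plugging in,
\begin{equation*}
|S|^4 \;\leq\; |S|\left[\,|S|^2 \cdot (d-1)\sqrt n \;+\; O(d^2)\cdot |S| \cdot n\,\right],
\end{equation*}
which rearranges to $|S|^2 \leq O(|S|\, d\sqrt n + d^2 n)$ and hence $|S| = O(d\sqrt n)$. This gives $\dgraph(M_f,\mathbf{0}) \geq n - O(d\sqrt n)$ for every non-zero $f$, proving the distance bound.
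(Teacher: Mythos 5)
Your proof is correct and the distance argument follows essentially the same route as the paper's: Cauchy--Schwarz, the trace reduction to odd-degree polynomials, the Weil bound for the nondegenerate terms, and root-counting for the degenerate ones, keyed in both cases to the coefficient of $u^k$ where $2i^\star = k\cdot 2^v$ with $k$ odd. The one genuine difference is in the degenerate count: you exploit the translation invariance of the inner character sum to reduce to a single variable $\Delta = y_1+y_2$, obtaining an absolute bound of $O(d^2)$ bad differences (hence $O(d^2)\lvert S\rvert$ bad pairs), whereas the paper keeps both variables and applies bivariate Schwartz--Zippel over $S^2$ to get $d\lvert S\rvert$ bad pairs; your count is worse by a factor of $d$ but still yields $\lvert S\rvert = O(d\sqrt n)$ after the final rearrangement, and the $\Delta$-reparametrization is arguably cleaner since the bad set no longer depends on $S$. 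You also supply a proof of the dimension bound (via distinctness of the cyclotomic cosets of the odd exponents $2i+1 \le d < 2^{t/2}$, which in fact forces the kernel of $f \mapsto M_f$ to be trivial, not merely small), a claim the paper asserts without proof; this is a worthwhile addition.
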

\begin{proof}
 The proof is very similar to the proof of Theorem~\ref{thm:DualBCH}. 
 Consider any $M_f \in C_{\mathrm{DualBCH}}(n,d)$ with $f \neq 0$. 
 It suffices to show that the independence number\footnote{An essentially identical proof shows that the clique number also has the same bound. 
 The only change is to replace the LHS of~\eqref{eqS2} by $-(|S|^2 - |S|)$, and this sign change does not affect anything later because we immediately apply Cauchy-Schwarz to get~\eqref{eq:dbcheq}. 
 
 This justifies our referring to this code as a ``code of Ramsey graphs".} of $M_f$ is $O(dn^{1/2})$.
 
 Assume that $S \subseteq \F_{2^t}$ is an independent set in $M_f$.
 Then 
 \begin{align}
     \label{eqS2}
     |S|^2 = \sum_{x,y \in S} (-1)^{\Tr(f(x+y))}.
 \end{align}
 
 As in the proof of Theorem~\ref{thm:DualBCH}, by the Cauchy-Schwartz inequality and some simple manipulations, we get:
 \begin{align}
 \label{eq:dbcheq}
  |S|^4 &\leq \left( \sum_{y_1, y_2 \in S} \left| \sum_{x \in \F_{2^t}} (-1)^{\Tr(P_{y_1,y_2}(x))} \right|   \right) \cdot |S|,
 \end{align}
 where:
 $$ P_{y_1, y_2}(X) = f(X+y_1) - f(X+y_2).$$

 At this point, we need an upper bound in the inner sum:
 $$ U_{y_1, y_2} = \left| \sum_{x \in \F_{2^t}} (-1)^{\Tr(P_{y_1, y_2}(x))}    \right|.$$

To get this, we will invoke the deep and powerful Weil bound:
 \begin{theorem} [\cite{Schmidt1976EquationsOF}, Chapter II, Theorem 2E]
 \label{thm:weil}
  Suppose $P(X) \in \F_{2^t}(X)$ is a nonzero polynomial of odd degree
  with degree at most $d$. Then:
  $$ \left| \sum_{x \in \F_{2^t}} (-1)^{\Tr(P(x))}    \right| \leq O(d 2^{t/2}).$$
 \end{theorem}

 We will use this to show that all but a few pairs $(y_1, y_2) \in S^2$, $U_{y_1, y_2}$ are small.
 
 \begin{lemma}
 \label{lem:fewys}
  For all but $d|S|$ pairs $(y_1, y_2) \in S^2$,
  $$U_{y_1, y_2} \leq O(d 2^{t/2}).$$
  is at most $d |S|$.
 \end{lemma}

 Assuming this for the moment, we can proceed with Equation~\eqref{eq:dbcheq}:
 \begin{align*}
  |S|^4 &\leq \left( d |S| \cdot 2^t + |S|^2 \cdot O(d \cdot 2^{t/2}) \right) \cdot |S|\\
  &= d |S|^2 2^t + O(d |S|^3 2^{t/2}).
 \end{align*}
 Thus:
 $$|S|^2 \leq d 2^t + O(d |S| 2^{t/2}),$$
 which implies that $|S| \leq O\left(d \cdot 2^{t/2} \right)$, as desired.

\end{proof}

\subsubsection*{Proof of Lemma~\ref{lem:fewys}}
\begin{proof}
Theorem~\ref{thm:weil} only applies to polynomials with odd degree. 
We first recall a standard trick involving the $\Tr$ map to deduce consequences for arbitrary degree polynomials.

Note that $\Tr(a^2) = \Tr(a)$ for all $a \in \F_{2^t}$, and that every element of $\F_{2^t}$ has a square root.
Thus for any positive degree monomial $M(X) = \alpha X^{i}$, where $i = j \cdot 2^k$ with $j$ odd, the equality:
$$ \Tr(M(x)) = \Tr(\tildeM(x))$$
for each $x \in \F_{2^t}$, where $\tildeM(X)$ is the odd degree monomial given by:
$$ \tildeM(X) = \alpha^{1/2^k} X^j.$$

Extending by linearity, this allows us to associate, to every polynomial $P(X) \in \F_{2^t}[X]$, a polynomial $\tildeP(X)$ with $$ \Tr(P(x)) = \Tr(\tildeP(x))$$ for all $x \in \F_{2^t}$, and where every monomial of $\tildeP(X)$ (except possibly the constant term) has odd degree.

The key observation is that whenever $\tildeP_{y_1, y_2}(X)$ is nonconstant, it has odd degree, and so we can apply the Weil bound.
In this case, since:
$$ \Tr(P_{y_1, y_2}(x) ) = \Tr(\tildeP_{y_1,y_2}(x))$$
for each $x \in \F_{2^t}$,
we get:
\begin{align}
U_{y_1, y_2} &=  \left|\sum_{x \in \F_{2^t}} (-1)^{\Tr(\tildeP_{y_1,y_2}(x))}  \right|\\
&\leq O(d \cdot 2^{t/2}),
\label{Ubound}
\end{align} 
where the last step follows from the Weil bound (Theorem~\ref{thm:weil}).

Thus, we simply need to show that there are at most $d |S|$ pairs $(y_1, y_2) \in S^2$ for which $\tildeP_{y_1, y_2}(X)$ is a constant. 

Suppose $f(X)$ has degree exactly $2e+1$. 
Let $\alpha$ be the coefficient of $X^{2e+1}$ in $f(X)$.
 
Define $\gamma_i(Y) \in \F_q[Y]$ by:
$$f(X+Y) = \sum_{j=0}^{2e+1} \gamma_i(Y) X^i.$$
Note that $\deg(\gamma_i(Y)) \leq 2e+1 - i$.
It is easy to check that $\gamma_{2e+1}(Y) = \alpha$
and $\gamma_{2d}(Y) = \alpha Y$.
 
Then we have:
\begin{align*}
P_{y_1, y_2}(X) &= f(X+y_1) - f(X+y_2)\\
&= \sum_{i \leq 2e} \left(\gamma_i(y_1) - \gamma_i(y_2) \right) X^i.
\end{align*}

Then, by definition,
 $$\tildeP_{y_1, y_2}(X) = \sum_{\substack{j \leq 2e-1\\ j\mbox{ odd }}} \left(\sum_{\substack{k \geq 0 \\ j2^k \leq 2e}} \left(\gamma_{j \cdot 2^k}(y_1) - \gamma_{j \cdot 2^k}(y_2)\right)^{\frac{1}{2^k}} \right) X^{j}.$$

We are trying to show that this is nonconstant for most $y_1, y_2$.
We will do this by identifying a monomial of positive degree, which often has a nonzero coefficient.
Let $e = j_0 \cdot 2^{k_0}$ with $j$ odd. We will focus on the coefficient of $X^{k_0}$.
It equals:
$$\left(\gamma_{2e}(y_1) - \gamma_{2e}(y_2) \right)^{1/{2^{k_0 +1}}} + \left(\sum_{0 \leq k \leq k_0} \left(\gamma_{j \cdot 2^k}(y_1) - \gamma_{j \cdot 2^k}(y_2)\right)^{\frac{1}{2^k}} \right) . $$

By linearity of the map $a \mapsto a^{1/2^k}$, this can be expressed in the form $Q(y_1^{1/2^{k_0+1}}, y_2^{1/2^{k_0+1}})$, where $Q(Z_1, Z_2)$ is a bivariate polynomial of degree at most $2e$. 
Furthermore, using the fact that $\gamma_{2e}(Y) = \alpha Y$, the homogeneous part of $Q(Z_1, Z_2)$ of degree $1$ exactly equals: $$\alpha^{1/2^{k_0 +1}} (Z_1 - Z_2),$$ which is nonzero. 
Thus $Q(Z_1, Z_2)$ is a nonzero polynomial.

Thus by the Schwartz-Zippel lemma, for $T = \{ y^{1/2^{k_0+1}} \mid y \in S \}$, there are at most $2e |T| \leq d |S|$ values of $(z_1, z_2) \in T^2$ such that $Q(z_1, z_2) = 0$. 
Thus there are at most $d|S|$ values of $(y_1, y_2) \in S^2$ for which the coefficient of $X^{k_0}$ in $\tildeP_{y_1, y_2}(X)$ is $0$. 
Whenever it is nonzero, Equation~\eqref{Ubound} bounding $U_{y_1, y_2}$ applies, and we get the desired result.

 \end{proof}

\section*{Acknowledgements}

We are grateful to Mike Saks, Shubhangi Saraf, and Pat Devlin for valuable discussions. We thank Alessandro Neri for very valuable remarks and references about rank metric codes and their connection to error-correcting graph codes.
We thank Lev Yohananov and Eitan Yaakobi for bringing~\cite{yohananovCodesGraphErasures2019,yohananovDoubleTripleNodeErasureCorrecting2020} to our attention.

\printbibliography

\newpage

\section{Appendix}

\subsection{Justensen-like code} 

The construction in this example is inspired by the Justensen code \cite{justesenClassConstructiveAsymptotically1972}, which uses an ensemble of codes for the inner code instead of a single inner code. Justensen uses an ensemble known as the Wozencraft Ensemble \cite{massey1963threshold}  with the following properties.

\begin{theorem}[Wozencraft Ensemble]\label{thm:wozencraft}
    For every large enough $k$, there exists codes $C^{(1)}, C^{(2)},...,C^{(2^{k}-1)}$ over $\F_2^{2k}$ with rate $1/2$, where $1-\epsilon$ fraction of them have distance at least $H^{-1}_2(1/2 - \epsilon)$.
\end{theorem}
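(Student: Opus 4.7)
The plan is to prove this by giving the standard explicit description of the Wozencraft ensemble and then using a counting argument. For each nonzero $\alpha \in \F_{2^k}$, fix an $\F_2$-linear isomorphism $\F_{2^k} \cong \F_2^k$, and define
\[
C^{(\alpha)} = \{ (x, \alpha \cdot x) : x \in \F_{2^k}\} \subseteq \F_2^{2k},
\]
where multiplication is in $\F_{2^k}$. Each $C^{(\alpha)}$ is $\F_2$-linear (multiplication by $\alpha$ in $\F_{2^k}$ is an $\F_2$-linear map), has dimension $k$ over $\F_2$, and is therefore a rate-$1/2$ code of block length $2k$. So I get $2^k - 1$ such codes, indexed by $\alpha \in \F_{2^k}^*$, as the statement requires.

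The next step is the key combinatorial observation: for every pair $(x, y) \in \F_{2^k} \times \F_{2^k}$ with $x \neq 0$, there is exactly one value of $\alpha$, namely $\alpha = y \cdot x^{-1}$, such that $(x, y) \in C^{(\alpha)}$. In particular, fixing any nonzero $(x,y)$ with $x \neq 0$, it contributes to at most one code in the ensemble. (Pairs with $x = 0$ cannot be nonzero codewords of any $C^{(\alpha)}$, since they force $y = 0$.) Therefore, if I let $N(d)$ denote the number of nonzero vectors in $\F_2^{2k}$ of Hamming weight at most $d$, then the total number of codes in the ensemble whose minimum distance is at most $d$ is at most $N(d)$.

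I then apply the standard entropy bound $N(d) \leq 2^{2k \cdot H_2(d/2k)}$, valid for $d \leq k$. Choose $d$ to be $\lfloor H_2^{-1}(1/2 - \epsilon) \cdot 2k \rfloor$, so that $H_2(d/2k) \leq 1/2 - \epsilon$, giving
\[
N(d) \leq 2^{2k(1/2 - \epsilon)} = 2^{k - 2k\epsilon}.
\]
Thus the fraction of codes in the ensemble with distance less than $H_2^{-1}(1/2 - \epsilon) \cdot 2k$ is at most $2^{k - 2k\epsilon}/(2^k - 1)$, which goes to $0$ (and is certainly smaller than $\epsilon$) once $k$ is large enough. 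So a $1 - \epsilon$ fraction of the codes have relative distance at least $H_2^{-1}(1/2 - \epsilon)$, as claimed.

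There is no serious obstacle in this proof: the only subtlety is setting up the $\F_2$-linear isomorphism cleanly so that each $C^{(\alpha)}$ genuinely is an $\F_2$-linear subspace of $\F_2^{2k}$, and recognizing that the ``exactly one code per nonzero pair'' property is precisely what makes the counting tight. The same property shows the ensemble is pairwise independent in a suitable sense, and it is this that drives the construction.
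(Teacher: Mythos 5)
The paper does not prove this statement itself; it cites it as a known fact about the Wozencraft ensemble (attributed to Massey). Your proof is the standard argument for that result and is correct: you give the explicit ensemble $C^{(\alpha)} = \{(x,\alpha x) : x \in \F_{2^k}\}$ for $\alpha \in \F_{2^k}^*$, note that any nonzero $(x,y)$ with $x \neq 0$ lies in at most one $C^{(\alpha)}$ (the unique $\alpha = y x^{-1}$, which moreover must be nonzero), and hence bound the number of codes containing a nonzero codeword of weight at most $d$ by the Hamming-ball volume $N(d) \le 2^{2k H_2(d/2k)}$; with $d/2k \approx H_2^{-1}(1/2-\epsilon)$ this gives a $2^{-2k\epsilon+o(k)}$ fraction of bad codes, which is below $\epsilon$ once $k$ is large. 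No gap.
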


Since our goal is graph distance, we use the $\STCZD$ operation to convert the Wozencraft Ensemble from codes over strings with good Hamming distance to codes of matrices with good graph distance.

\begin{lemma}[Wozencraft Ensemble Modification]\label{lem:wozencraft-mod}
For any $\epsilon > 0$, and large enough $k$, there exists codes $D^{(1)}, D^{(2)},...,D^{(2^{k}-1)}$ over $\F_2^{2k \times 2k}$. View these as directed graph codes. Then these codes have rate $1/8$, and at least a $1-\epsilon$ fraction of them have distance at least $H^{-1}_2(1/2 - \epsilon)$.
\end{lemma}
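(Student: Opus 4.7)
The plan is to obtain $D^{(i)}$ by applying the $\STCZD$ operation to each member $C^{(i)}$ of the Wozencraft Ensemble supplied by Theorem~\ref{thm:wozencraft}. That is, I would set $D^{(i)} = \STCZD(C^{(i)}) \subseteq \F_2^{2k \times 2k}$ for each $i \in \{1, \ldots, 2^k - 1\}$, viewed as a directed graph code on vertex set $[2k]$. The machinery of Lemma~\ref{lem:symmetric-tensor-properties} then hands us both parameters of interest essentially for free.

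For the rate, recall each $C^{(i)}$ is a $[2k, k, d_i]_2$ code. By Lemma~\ref{lem:symmetric-tensor-properties}, $D^{(i)}$ is a linear code over $\F_2$ with dimension at least $\binom{k+1}{2} - 2k = \tfrac{k^2 - 3k}{2}$. Since a directed graph code's rate is computed as $\log_2|D^{(i)}|/n^2$ with $n = 2k$, we get rate at least $\tfrac{k^2 - 3k}{8k^2} = \tfrac{1}{8} - o(1)$, establishing the rate claim (the $o(1)$ is absorbed into the ``large enough $k$'' qualifier, or one can interpret the stated rate of $1/8$ asymptotically as is standard throughout the paper).

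For the distance, Lemma~\ref{lem:symmetric-tensor-properties} says the directed graph distance of $D^{(i)}$ equals the Hamming distance $d_i$ of the underlying $C^{(i)}$. By Theorem~\ref{thm:wozencraft}, for a $(1-\epsilon)$-fraction of the indices $i$, the code $C^{(i)}$ has (absolute) Hamming distance at least $H_2^{-1}(1/2-\epsilon) \cdot 2k$, so the corresponding $D^{(i)}$ has relative directed graph distance at least $H_2^{-1}(1/2-\epsilon)$. Combining with the rate calculation above completes the argument.

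There is no real technical obstacle here; the lemma is essentially a transfer principle that converts a good-fraction-of-the-ensemble statement in the Hamming metric (Theorem~\ref{thm:wozencraft}) into the corresponding statement in the directed graph metric via $\STCZD$. The only point to watch is the small additive loss $-2n$ in the dimension bound of Lemma~\ref{lem:symmetric-tensor-properties}, which produces the $o(1)$ deficit in the rate and forces the ``large enough $k$'' hypothesis; but for the asymptotic rate $1/8$ this is harmless.
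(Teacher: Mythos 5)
Your proposal is correct and follows essentially the same route as the paper: define $D^{(I)} = \STCZD(C^{(I)})$ for each member of the Wozencraft ensemble and invoke Lemma~\ref{lem:symmetric-tensor-properties} to transfer the rate (via the dimension bound $\binom{k+1}{2}-2k$, giving $1/8 - o(1)$) and to convert Hamming distance into directed graph distance for the good $(1-\epsilon)$-fraction of the ensemble. Your explicit rate computation is slightly more detailed than the paper's, but the argument is the same.
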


\begin{proof}
    Let $C^{(1)}, C^{(2)},...,C^{(N)}$ be the Wozencraft Ensemble. For each $I \in [N]$, define $D^{(I)} = \STCZD(C^{(I)})$. Note that each $D^{(I)}$ is a code over $\F_2^{2k \times 2k}$. Note that by lemma \Cref{lem:symmetric-tensor-properties}, each of the codes has rate $1/8$. Since the $\STCZD$ operation translates Hamming distance to directed graph distance, we also have the same guarantee as the original Wozencraft Ensemble - at least $(1-\epsilon)$ fraction of the codes have distance at least $H^{-1}_2(1/2 - \epsilon)$.
\end{proof}

Concatenating $\STCZD(RS)$ with the modified Wozencraft Ensemble in a particular arrangement yields our next construction.

\begin{code}[Justensen-like $C_{\mathrm{Justensen}}(\epsilon, k, \rho)$]
Require $\epsilon, \rho \in (0,1)$.
    Let $Q = 2^k$, and $N = 2^k-1$.

    Let $D^{(1)}, D^{(2)},...,D^{2^k-1}$ be the modified Wozencraft Ensemble \Cref{lem:wozencraft-mod}.

    Then $C_{\mathrm{Justensen}}$ is the code where for each element of $A \in \STCZD(\mathrm{RS}(N, \rho, Q))$, for each $I, J \in [N]$, we replace the symbol at $A_{IJ}$ with its encoding under $D^{(\min(I, J))}$. If $J < I$, we transpose the encoding (to keep the matrix symmetric).
\end{code}

\Cref{fig:Justensen} shows where each inner code is applied.

\begin{figure}[ht]
    \centering
$$ \begin{bmatrix}
- & D^{(1)} & D^{(1)} & D^{(1)} & D^{(1)} & D^{(1)} \\
D^{(1)T} & - & D^{(2)} & D^{(2)} & D^{(2)} & D^{(2)} \\
D^{(1)T} & D^{(2)T} & - & D^{(3)} & D^{(3)} & D^{(3)} \\
D^{(1)T} & D^{(2)T} & D^{(3)T} & - & D^{(4)} & D^{(4)} \\
D^{(1)T} & D^{(2)T} & D^{(3)T} & D^{(4)T} & - & D^{(5)} \\
D^{(1)T} & D^{(2)T} & D^{(3)T} & D^{(4)T} & D^{(5)T} & - 
\end{bmatrix}  $$
    \caption{Inner code arrangement for $C_{\mathrm{Justensen}}$}
    \label{fig:Justensen}
\end{figure}

\begin{theorem}\label{thm:justensen-like}
For any $\epsilon, \rho \in (0,1)$, and $k$, a sufficiently large integer, $C_{\mathrm{Justensen}}(\epsilon, \rho, k)$ is a strongly explicit linear graph code with rate $\rho^2 / 8 - o(1)$, and distance at least $(1 - \rho - \epsilon)H^{-1}(1/2 - \epsilon)$.
\end{theorem}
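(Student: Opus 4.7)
This is a Justesen-style concatenation argument adapted to the (directed) graph-distance setting. Four things need to be checked: the code is linear and a valid graph code, it is strongly explicit, its rate is $\rho^2/8 - o(1)$, and its distance is at least $(1-\rho-\epsilon)\,H_2^{-1}(1/2-\epsilon)$. The first three are routine, and the distance is the only substantive step.

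Linearity and the graph-code property are inherited from the building blocks: the outer code $\STCZD(\mathrm{RS}(N,\rho,Q))$ is linear, symmetric, and zero-diagonal, each inner code $D^{(I)} = \STCZD(C^{(I)})$ is linear, and symmetric concatenation is $\F_2$-linear and preserves both symmetry (below-diagonal blocks are transposes of the corresponding above-diagonal blocks) and the zero diagonal (diagonal outer entries are $0$, so diagonal blocks vanish). Strong explicitness is similar: the outer code is strongly explicit by \Cref{thm:stczd-rs-se}, while each $C^{(I)}$ in the Wozencraft ensemble is computable in $\poly(k) = \poly(\log N)$ time via $\F_{2^k}$ arithmetic, and $\STCZD$ preserves strong explicitness. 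Crucially, we do not need to identify which $D^{(I)}$ are ``good''---the distance argument below uses only that a $(1-\epsilon)$ fraction of them are. For the rate, the outer code has dimension at least $\binom{\rho N-1}{2}$ over $\F_Q$ by \Cref{thm:stczd-rs-se} and each $D^{(I)}$ has dimension at least $\binom{k+1}{2}-2k$ over $\F_2$ by \Cref{lem:symmetric-tensor-properties}; \Cref{lem:concat} multiplies these, and dividing by the graph blocklength $\binom{2kN}{2}$ yields rate $\rho^2/8 - o(1)$.

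For the distance, fix any nonzero codeword $C$ of $C_{\mathrm{Justensen}}$ and any $S \subseteq [2kN]$ with $|S| < (1-\rho-\epsilon)\,H_2^{-1}(1/2-\epsilon)\cdot 2kN$; we show $C_{\overline{S},\overline{S}}\neq \mathbf{0}$. Partition $[2kN]$ into $N$ blocks of size $2k$ according to the outer coordinates and let $S_I = \{i \in [2k] : (I,i) \in S\}$. Set the threshold $\tau := H_2^{-1}(1/2-\epsilon)\cdot 2k$ and let
\[
S_{\geq} := \{I \in [N] : |S_I| \geq \tau\}, \qquad \mathrm{BAD} := \{I \in [N] : D^{(I)} \text{ has directed graph distance } < \tau\}.
\]
The inequality $\tau\,|S_{\geq}| \leq \sum_I |S_I| = |S|$ gives $|S_{\geq}| < (1-\rho-\epsilon)\,N$, while \Cref{lem:wozencraft-mod} gives $|\mathrm{BAD}| \leq \epsilon N$. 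Hence $S'_{\geq} := S_{\geq} \cup \mathrm{BAD}$ has size strictly less than $(1-\rho)\,N$, which is exactly the directed graph distance of the outer code by \Cref{lem:symmetric-tensor-properties} (the RS base code has Hamming distance $(1-\rho)N$).

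Let $A$ denote the outer codeword corresponding to $C$; it is nonzero since $C$ is. Invoking the outer directed distance with both row and column sets equal to $[N]\setminus S'_{\geq}$ yields indices $I,J \in [N]\setminus S'_{\geq}$ with $A_{IJ}\neq 0$. Since $I,J \notin \mathrm{BAD}$ we have $\min(I,J) \notin \mathrm{BAD}$, so $D^{(\min(I,J))}$ has directed graph distance at least $\tau$. The $(I,J)$-block of $C$ is (up to transpose) a nonzero codeword of $D^{(\min(I,J))}$; since $|S_I|,|S_J| < \tau$, its restriction to $\overline{S_I}\times\overline{S_J}$ is nonzero (taking a transpose does not change this, by swapping the roles of rows and columns in the definition of directed graph distance), which gives a nonzero entry of $C_{\overline{S},\overline{S}}$ and completes the argument. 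The main obstacle is this two-step combinatorial accounting---paying separately for heavily hit outer coordinates and for defective inner codes---and the threshold $\tau = H_2^{-1}(1/2-\epsilon)\cdot 2k$ is chosen precisely so the two budgets $|S_{\geq}| < (1-\rho-\epsilon)N$ and $|\mathrm{BAD}|\leq \epsilon N$ add up to the outer directed distance $(1-\rho)N$.
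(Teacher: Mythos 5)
Your proof is correct and follows essentially the same approach as the paper: partition the deletion set $S$ into per-block sets $S_I$, bound the number of heavily-hit blocks using the budget $|S| < (1-\rho-\epsilon)\tau N$, union them with the bad inner codes from the Wozencraft ensemble to get fewer than $(1-\rho)N$ blocks, then invoke outer directed distance followed by inner directed distance. Your write-up is in fact somewhat more careful than the paper's on the rate bookkeeping (correctly using $\log_2 Q = \dim_{\F_2} D^{(I)} \approx k^2/2$ rather than the $Q = 2^k$ stated in Code 7, which appears to be a typo) and on explicitness, but the key distance argument is the same.
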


\begin{proof}
Let $N = 2^k - 1$ and $n = 2k$ be the side lengths of the inner and outer codes, respectively. First note that $C_{\mathrm{Justensen}}$ is a linear graph code over $\F_2^{nN \times nN}$, since both the inner and outer codes are linear, and we can apply a $\F_2$ linear map from $\F_{2^k} \to \F_{2}^k$ before encoding with the inner code.

\textbf{Rate.} By \Cref{lem:symmetric-tensor-properties}, the outer code, $\STCZD(\mathrm{RS}(N, \rho, Q))$, has rate $\rho^2/2-o(1)$, and by \Cref{lem:wozencraft-mod}, the inner codes have rate $1/8-o(1)$. Thus, the rate is $\rho^2/8 - o(1)$ as an undirected graph code.

\textbf{Distance.} Let $O$ be a non-zero outer codeword. For convenience, let $d = H^{-1}(1/2 - \epsilon)$, and $n = 2k$ be the side length of the inner code. We claim the distance is at least $(1 - \rho - \epsilon)d$. 

Call $I \in [N]$, bad if the distance of $D^{(I)} < d$, and good otherwise. Let $B\subset [N]$ be the subset of bad indices. By the guarantee of the Wozencraft ensemble, we know that $|B| < \epsilon N$. Since $O_{IJ}$ gets encoded with $\min(I, J)$, if $I, J \notin B$, then $O_{IJ}$ is encoded with an inner code of distance at least $d$. 

Define $S_I, T_J$ as in the proof of \Cref{lem:concat}. Let $S_\geq = \{I : |S_I| \geq d_{in} \text{ and } I \notin B\}$. Similarly, define $T_{\geq}$. Then $|S_\geq|, |T_\geq| < (1 - \rho - \epsilon) N$. Then $|S_\geq \cup B|, |T_\geq \cup B < (1 - \rho)N$. Since this is less than the outer distance of the code, we have that $O_{IJ} \neq 0$ for some $I \notin S_\geq \cup B$, and $J \notin T_\geq \cup B$. In other words, $|S_I| < d$, $|T_J| < d_{in}$, and $O_{IJ}$ is encoded with a code of directed graph distance at least $d$. Thus, by the inner distance, there remains a non-zero element in the $(I, J)$th block outside of $S_{I}$, and $T_J$.
\end{proof}

\end{document}